\newtheorem{definition}{Definition}
\newtheorem{assumption}{Assumption}
\newtheorem{proposition}{Proposition}
\newtheorem{theorem}{Theorem}
\newtheorem{example}{Example}
\newtheorem{lemma}{Lemma}
\newtheorem{remark}{Remark}
\begin{document}
\title{A General Definition of Network Communities and the Corresponding Detection Algorithm}
\author{
\IEEEauthorblockN{Haoye Lu\\}
\IEEEauthorblockA{Department of Mathematics and Statistics, University  of Ottawa}

\IEEEauthorblockN{Amiya Nayak\\}
\IEEEauthorblockA{
SITE, University  of Ottawa,\\
%Ottawa K1N 6N5, Canada   \\
Email: \{hlu044, nayak, sbehe070\}@uottawa.ca}}

\author{Haoye~Lu,~\IEEEmembership{Member,~IEEE,}
        Amiya~Nayak,~\IEEEmembership{Senior Member,~IEEE}% <-this % stops a space
\thanks{Haoye~Lu and Amiya~Nayak are with School of Electrical Engineering and Computer Science, University of Ottawa,
Ontario, K1N 6N5 Canada e-mail: (hlu044@uottawa.ca; nayak@uottawa.ca).}% <-this % stops a space
}

\maketitle{}

\begin{abstract}
Network structures, consisting of nodes and edges, have applications in almost all subjects. A set of nodes is called a community if the nodes have strong interrelations. Industries (including cell phone carriers and online social media companies) need community structures to allocate network resources and provide proper and accurate services. However, all the current detection algorithms are motivated by the practical problems, whose applicabilities in other fields are open to question. Thence, for a new community problem, researchers need to derive algorithms ad hoc, which is arduous and even unnecessary. In this paper, we represent a general procedure to find community structures in practice. We mainly focus on two typical types of networks: transmission networks and similarity networks. We reduce them to a unified graph model, based on which we propose a general method to define and detect communities. Readers can specialize our general algorithm to accommodate their problems. In the end, we also give a demonstration to show how the algorithm works.

\end{abstract}
\begin{IEEEkeywords}
Social network, Network modelling, General community definition, Community detection, Network clustering
\end{IEEEkeywords}

%--------------------------------------------Introduction-----------------------------------------------------
\section{Introduction}
Our real world consists of elements associated by relations. We call the entity network made by elements with the relations among them. Most real world networks are not random as they reveal big inhomogeneity, high level of order and organization \citep{communityDetectioninGraphs}. For example, people working in a company may have much closer relations than the ones outside the company. The observation inspires people to partition the elements into groups (communities) such that the relations are strong and dense within the groups but sparse and weak among them \citep{6ac6eb110668494abd0c1fa99dc5daac, Fortunato201075,Newman12communities}.

Community detections have widespread applications. Amazon groups customers buying similar products together for better commodities recommendations. Facebook clusters the users by relationships,  hobbies, etc. to accurately suggest users with new friends and circles. Carriers group the locations among which customers have high transportation demands for a proper assignment of vehicles.

Because of the omnipresent community structures in practice, researchers try to find proper algorithms to detect them. There are mainly four traditional methods \citep{communityDetectioninGraphs}: graph partitioning \citep{kernighanLin}, hierarchical clustering \citep{HastieTrevorTEoS}, partitional clustering \citep{NeuralNetworksandComputationalIntelligence,Rattigan2007} and spectral clustering \citep{doi:10.1137/0603056,Luxburg:2007:TSC:1288822.1288832,Li2018b,Lu2018a}. These methods are designed for different purposes and reveal many fundamental properties of networks. After that, many related algorithms are proposed (for instances, modularity-based methods \citep{Fastalgorithmfordetectingcommunitystructureinnetworks,Chen2014, Kaur2017}, dynamic algorithms \citep{PhysRevLett.93.218701}, methods based on statistical inference \citep{Wan2016,Bickel15122009}, maximum likelihood \citep{Airoldi2008,7893802} and network motifs \citep{Milo824,lu2014,Benson163}). Recent research shows that modularity optimization and maximum likelihood algorithms are equivalent \citep{PhysRevE.94.052315}. 

Most algorithms work well in the areas from which they are derived. But the reliabilities outside their zones are controversial. The poor adaptabilities of the algorithms reveal the demand for general community detection methods as well as the general community definitions. Besides, most algorithms start from graphs directly, while the procedure of the math model abstraction is rarely formalized. In this paper, we try to tackle these problems. In particular,
\begin{enumerate}
	\item Is there a general way to abstract a concrete problem into a unified math model?
	\item Based on the unified math model, are there some common properties shared by most community definitions?
	\item Is there a common method to detect community structures?
\end{enumerate}
The rest of the paper is organized as follows. In Section~\ref{secRW}, we have a review on the popular community detection algorithms. And we also discuss their pros and cons. In Section~\ref{secCPR}, we introduce how to reduce a concrete problem into a graph model. Based on this, we define the community structure in Section~\ref{secComm}. In Section~\ref{secProp}, we propose and prove some propositions regarding our community definition. Then we provide a corresponding detection algorithm. A demonstration is given in Section~\ref{secDemo} and finally, we talk about the limitations of our model as well as the future work in Section~\ref{secLimit}.
%We start the discussion from two main types of networks. One is based the material transmission and the other one is based on the nodes similarity. We show that both two types can be deduced to the same graph model. After that, we give a general definition of community structures based on the absolute relation strengths among nodes. Moreover, we discuss some propositions about this definition and give an algorithm to detect the community structures. At last, we give a complete demonstration to show how the algorithm works.
%--------------------------------------------Related work-----------------------------------------------------
\section{Related Work}\label{secRW}
The research of community detection starts from solving some concrete problems. For instance, Kernighan-Lin algorithm \citep{kernighanLin} is designed for clustering digital components into equal or nearly equal size communities such that mutual connections among the sets are minimized (due to the cost and stability consideration, electronic engineers need to minimize the number of connections among boards).

The researches always reduce the real-world network structures to the nodes connected by edges, where the edges represent the relations among the nodes. Although the meanings of relations vary in different papers, there are two main types: two nodes is related if
\begin{enumerate}
	\item there are material transmission between them, AND/OR
	\item they share some identical or similar properties.
\end{enumerate}

The material means concrete objects (like goods) or information (like data packages). One of the most typical examples is the transportation among cities. The cities easily communicating with each other are grouped into one community \citep{Guimera31052005}.

There are also many network structures constructed by the similarity of the properties of the nodes. In protein-protein interaction networks, biologists would cluster proteins with equivalent or similar functions into one group \citep{Chen:2006:DFM:1181979.1182343}. Then, the relations are the function similarities of the proteins. In social networks, the people active in the similar locations and/or the time slots could be considered as a community. Then the relations represent the location and schedule similarity. In World Wide Web, the communities may correspond to the groups of pages concerned with the related topics or events \citep{Dourisboure:2007:ECD:1242572.1242635, Chaabani2017}. Then the relations become the content similarity. 

Based on the graph model, lots of community detection algorithms have been proposed. 

%-------- Graph partitioning ----------
Graph partitioning method groups the vertices into a predefined number of communities and minimize the number of edges among the groups. Most of the algorithms belonging to this method can perfectly solve particular problems in practice. However, the algorithms are not adapted to community detections due to the necessity of the pre-specified number of groups, which is in general unknown in community detection problems \citep{communityDetectioninGraphs}. More seriously, the graph partitioning method is not derived from an explicit definition of communities. So there is no guarantee that the vertex groups found by the method are communities following our intuitions.

%--------Hierarchical clustering ----------
Real-world networks commonly have hierarchical structures from which the abstracted graph models usually inherit. The corresponding detection algorithms fall into two types: agglomerative (bottom-up) approach and divisive (top-down) approach \citep{HastieTrevorTEoS}. Briefly, agglomerative approach starts from considering each node as a community and merges the community pairs as moving up the hierarchy. The divisible algorithm works in the opposite way. It starts by grouping all the nodes in one cluster and performs splitting recursively as moving down the hierarchy.

%--------partitional clustering  ----------
Partitional clustering also plays an important role in the graph clustering family. In order to apply the algorithm, the user must specify the number of clusters, which causes the same disadvantages as the ones belonging to graph partitioning \citep{communityDetectioninGraphs}. The method puts all nodes in a metric space, and thus the distance between each pair of the nodes in the space is defined. The distance used in the algorithm can be considered as a measure of dissimilarity between the nodes. The algorithm needs to cluster the nodes into a pre-specified number of groups which minimize a given cost function.

%--------Spectral clustering ----------
All community detection methods and techniques related to matrix eigenvectors belong to spectral clustering. The clustering method requires a distance function to measure the similarity among the objects. The fundamental idea behind the algorithm is to utilize the eigenvectors to cluster objects by connectedness rather than the distance. Although there are many successful applications in image segmentation and machine learning, researchers have already found several fundamental limitations. For instance, spectral clustering would fail if it uses the first $k$ eigenvectors to find $k$ clusters when confronting with clusters of various scales regarding a multi-scale landscape potential \citep{6287457}. Besides, the algorithm needs the assistance of partitional clustering, whose drawbacks are also inherited to spectral clustering.

%--------------------------------------
Most of the aforementioned algorithms have remarkable performances in the problems they derived from. And their complexities are optimized for the dynamic network community detection \citep{PhysRevLett.93.218701,PhysRevLett.96.114102,PhysRevLett.101.168701}. However, researchers hardly focus on the general definition of communities and the corresponding general community detection algorithms. A few preliminary works have been done. In particular, the comparisons have been made among the different community definitions as well as the corresponding detection algorithms \citep{1742-5468-2005-09-P09008}. Besides, some researcher believes that the definition often depends on the specific system at hand and/or application one has in mind~\citep{communityDetectioninGraphs}.

%--------------------------------------------Reduction-----------------------------------------------------
\section{Concrete Problem Reduction}\label{secCPR}
Most people believe that, a community is some set of objects where the interrelations are strong. However, there are lots of arguments on the definitions of relations as well as the ways to measure them. Generally, we derive the relations in two ways. 
	
One is based on the \textbf{materials} transmission. The materials here can represent both material substance (like goods) or just information (like data). Intuitively, the objects that can easily communicate with each other should have strong relations among them. Then those objects can somehow be considered as a community. 
	
The other one is based on the similarity (for example, people buy the similar kind of books might be considered as a community). In usual cases, we believe that the objects in the same community should share some other similar properties. Then we can make some reasonable predictions on the community level (for example, Amazon uses this trick to recommend commodities).
	
Although two community derivations come from different motivations, we show that they can be reduced to the same graph model.

\subsection{Transmission network}
\subsubsection{Transmission relation characteristics}\label{TRC}
	In order to make the problem easier to discuss, only one material will be considered. Moreover, we need the following definitions and assumptions.
	
	\begin{assumption}
	For any material, there is a minimal unit can be transferred.  And we call the minimal unit a point.
	\end{assumption}
	
	\begin{definition}[Node]
	The objects that receives and sends points is nodes.
	\end{definition}
	
	\begin{definition}[Medium]
	The object that propagates points is medium. 
	\end{definition}
	
	\begin{assumption}\label{property implication}
		The transmission relations are constructed in nodes, media and points, which also determine the properties of the relations.
	\end{assumption}
	
	Because of Assumption~\ref{property implication}, the entity under the consideration consists of nodes, points and media. And we name it \textbf{transmission network}.
	
	Two most important characteristics in a transmission network are \textit{the number of points transferred} and \textit{the time consumed} in a transmission process. Their ratio is termed \textbf{speed}.
	\begin{definition}[Speed]
		Speed is the number of points transferred in a unit time interval.
	\end{definition}
	
	The transmission capability can be described by the speed function of time. For simplicity, only the node pairs connected by media directly are considered. By Assumption~\ref{property implication}, the behaviour of the speed function $f$ depends on the properties of the nodes, media and points. This means a specific analytic expression of the speed function cannot be given unless all the properties have been designated. However, some common characteristics can be expected. Suppose there is a pair of nodes $(u,v)$ connected by media directly. Then (see Figure~\ref{transmission rough graph}) the value of the speed function $f_{(u,v)}(t)$ remains zero until some point because of the latency caused by the sending, propagation and reception of the points. 

	\begin{figure}[h]
		\centering
		\includegraphics[width=0.8\columnwidth]{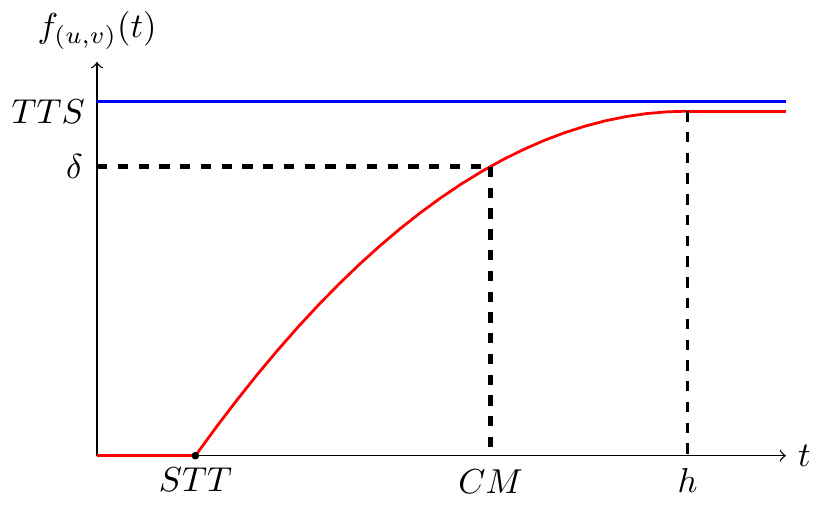} 
		\caption{A typical speed function function of time associated to node pair $(u,v)$.}\label{transmission rough graph}
	\end{figure}
		
	We call the moment that the function value becomes nonzero the \textbf{shortest transmission time (STT)} (theoretically, $STT = \text{sending time}+\text{transmission time}+\text{receiving time}$.). Besides, there is always an upper bound of the point propagation speed. And we call the least upper bound \textbf{theoretically transmission speed (TTS)}.
			
	However, in the real world, TTS may never be reached, which leads to the meaninglessness of it. So instead, we can choose some proper \textbf{threshold} $\delta \in \mathbb{R^*}$, the lowest acceptable speed. Correspondingly, we call the time to reach the threshold the \textbf{critical moment (CM)}.
	
	\begin{remark}
		Suppose there are two towns $A$ and $B$ near a river. $A$ is upstream of $B$. Consider the goods transportation on the river. Because of the water stream, $f_{(A,B)}(t) \geq f_{(B,A)}(t)$.	Thus, in general, $f_{(u,v)}(t) \neq f_{(v,u)}(t)$. 
	\end{remark}
	
	\subsubsection{Special relation strength measurement for transmission network}\label{SRSM}
		The speed functions of time can describe almost all the important properties of the relations in a transmission network. So, the special relation strength measurement (SRSM) can be derived from the function according to the problem confronting us. Here are several examples.
		\begin{enumerate}
			\item The shorter STT is, the stronger the corresponding relation is.
			\item The shorter CM is, the stronger the corresponding relation is.
			\item The shorter the time to transfer a certain number of points is, the stronger the corresponding relation is.
		\end{enumerate}
		
		Although a reasonable SRSM cannot be constructed until a concrete problem is given, several key properties should be shared by all SRSMs.
		
		Specifically, in a transmission network, relations should not cancel out each other. So the SRSM is non-negative. Besides, the strongest relation in a transmission network should be the one that the node relates to itself as the transmission speed can be considered as infinity. Intuitively, all SRSMs should have the same value in this case, which is assigned zero. Moreover, the relation strength will get weaker if the function value increases. Therefore, we have the following definition. 
		
		\begin{definition}[SRSM]
			Given some network, let $N$ be the set of nodes in it and $P \subseteq N \times N$ the set of node pairs connected by media directly. Then a function $s: P \rightarrow \mathbb{R^*}$ is a special relation strength measurement (SRSM) if it satisfies
			\begin{enumerate}
				\item $s(u,v) \geq 0$.
				\item $s(u,v) = 0$ if and only if $u = v$.
			\end{enumerate}
		\end{definition}
		
		\begin{remark}
			Since transmission functions have various values if we change the order of the parameters in general, so do SRSMs. Namely, $s(u,v) \neq s(v,u)$ in general.
		\end{remark}
		
		With the help of SRSM, the transmission network can be reduced to a \textbf{graph model}. In the graph, the nodes are represented by the vertices and connected by a weighted edge if they are connected by media directly. Moreover, the weights of the edges are assigned by a specific SRSM depending on the practical problem. In general, the graph should be directed. If $s(u,v) = s(v,u)$ for all applicable pairs of nodes $(u,v)$ in a network, the graph can also be considered as an undirected one.
		
		In Section~\ref{RSMFTN}, the graph model will be used to define more general relation strength measurement. Besides, \textit{node} and \textit{vertex} are used alternatively without considering the difference. So are \textit{relation} and \textit{edge}.  
	
	\subsubsection{Relation strength measurement for transmission network}\label{RSMFTN}
	In this part, the discussion is based on the graph model. Firstly, some notations of graphs need to be introduced.
	
	Let $G := (V_G, E_G)$ be a weighted graph where $V_G$ and $E_G$ are the sets of vertices and edges of graph $G$ respectively. For any edge $e \in E_G$, its weight is denoted by $|e|$. If there is no ambiguity about the choice of the graph, $V_G$ and $E_G$ are abbreviated to  $V$ and $E$. 
	
	In the previous section, SRSM measures the direct relations between any pair of nodes. That is, the retransmission function of nodes is not taken into account. Neither is the parallel transmission on various paths. However, in the real world, the retransmission and parallel transmission of points happen frequently (like express service and data transmission on the internet). So we have to derive a more general relation strength measurement function, which is named \textbf{relation strength measurement for transmission network (RSMFTN)}.
	
	For the same reason in the derivation of SRSM, an analytic expression of RSMFTN cannot be given until a concrete problem is designated. But a reasonable RSMFTN must hold several key properties.
	
	First of all, all the relations between any pair of nodes should be measurable. That is, the domain of RSMFTN is $V \times V$. The properties of SRSM should be inherited. Then RSMFTN is also non-negative and $RSMFTN(u,v) = 0$ if and only if $u = v$. Besides, there is no relation between the nodes belonging to two disconnected components. In contrast to the coincidence of two nodes, this is the other extreme. So it is reasonable to set the value of $RSMFTN$ to be infinity ($\infty$), an element greater than any real number. The relation of some vertices $u$ and $v$ gets stronger if the function value $RSMFTN(u,v)$ approaches to zero. 
	
	Consider a linear graph (Figure~\ref{linear graph}). The points sent by $u$ and received by $v$ must be retransmitted by $w$. So the difficulty to transfer points from $u$ to $v$ is not less than the one from $u$ to $v$ visiting $w$.	Notice that the difficulty of nodes to receive and send points has been included by SRSM and so indicated by the weights of the edges. Hence, the equality should hold. In other words, $RSMFTN(u,v) = RSMFTN(u,w)+RSMFTN(w,v)$ if $w$ is a cutting node on the path from $u$ to $v$. To add on, the transmission difficulty does not increase if we add some other retransmission node $s$ (see Figure~\ref{one more node}). Thence, in general, we have $RSMFTN(u,v) \leq RSMFTN(u,w)+RSMFTN(w,v)$. 
		\begin{figure}[h]
		\centering
			\begin{subfigure}[b]{0.4\columnwidth}
			\includegraphics[width=\columnwidth]{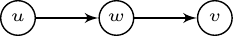}\caption{}\label{linear graph}
			\end{subfigure}\hspace{1em}
			\begin{subfigure}[b]{0.4\columnwidth}
			\includegraphics[width=\columnwidth]{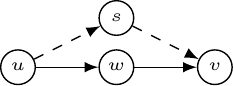}\caption{}\label{one more node}
		\end{subfigure}
		\caption{}
		\end{figure}
	
	Moreover, the ratio of the relation strengths (\textbf{the relative relation strength}) of two pairs of nodes should be fully determined by the relative magnitude of the weights. Also, for an indirect graph, the directions of edges are not taken into account; therefore, $\forall (u,v)\in V\times V. \hspace{0.5em} RSMFTN(u,v) = RSMFTN(v,u)$. 
	
	To sum up, we can define the RSMFTN as follows,
		
	\begin{definition}[Relation strength measurement for transmission network]
		 Suppose $G$ is some directed graph and $g: {V \times V} \rightarrow \mathbb{R} \cup \{\infty\}$ a function. Then $g$ is a relation strength measurement for transmission network (RSMFTN) if and only if for $u,v,w \in V$,		  
		 \begin{itemize}
		 	\item[1.] $g(u,v) \geq 0$. (non-negativity)
		 	\item[2.] $g(u,v)=0$ if and only if $u$ and $v$ coincide. (coincidence axiom)
		 	\item[3.] $g(u,v)= \infty$ if and only if there is no path between $u$ and $v$.
		 	\item[4.] $g(u,v) \leq g(u,w)+g(w,v)$. Moreover, the equality holds if the two components that contain $u$ and $v$ are connected by the cutting node $w$.
		 	\item[5.] Suppose $G'$ is a graph which is same as $G$ except that the edges' weights in $G'$ are all $\alpha$ times greater than the ones in $G$. Then for the corresponding vertices $u'$ and $v'$ in $G'$, $g(u',v') = \alpha g(u,v)$ % This property is designed for the normalization in the future.
		 \end{itemize}
		 Besides, if $G$ is undirected,
		 \begin{itemize}
		 	\item[6.] $g(u,v)=g(v,u)$ (symmetry)
		 \end{itemize} 
	\end{definition}	
	
	\begin{example}\label{SDFISRSM}
		Actually, many measurements derived by other researchers are RSMFTNs. A well-known one should be the shortest distance function (SDF), which evaluates the shortest distance between some pair of nodes in a graph. Here is the proof.
			\begin{proof}		
			Suppose $u$, $v$ and $w$ are arbitrary vertices in some directed graph $G$. Let $g$ denote the SDF. We prove the proposition when $G$ is weighted. The proof for the unweighted graph follows by setting the weights of the edges to one. 
			
			Since $g(u,v)$ returns the sum of weights on the shortest path from $u$ to $v$, $g(u,v) \geq 0$. The property $1$ holds. The properties $2$ and $3$ hold by the definition of SDF.
			
			For property 4, assume $g(u,v) > g(u,w)+g(w,v)$. Consider the path $p$ consisting of the shortest path from $u$ to $w$ and the one from $w$ to $v$. It is easy to see that the length of the path $l(p)$ is $g(u,w)+g(w,v)$, which is shorter than $g(u,v)$. So $g(u,v)$ cannot be the length of the shortest path. We get a contradiction. Moreover, if there exists a cutting node $w$ connecting the components  containing $u$ and $v$ respectively, the shortest path between $u$ and $v$ can be split into the one from $u$ to $w$ and the one from $w$ to $v$. So $g(u,v) = g(u,w)+g(w,v)$. Therefore, the property 4 holds.
			
			For property 5, suppose path $p$ is some shortest path from $u$ to $v$ in graph $G$. Then $l(u,v) = g(u,v)$. 
			Let $p'$ denote its counterpart in $G'$. Since the weights in $G'$ are $\alpha$ times the ones in $G$, so is the length of path $p'$. That is, $l(p') = \alpha l(p)$. Since path $p'$ connects $u'$ and $v'$ in $G'$, we have $g(u',v') \leq l(p')$. In other words, $g(u',v') \leq \alpha l(u,v)$. Similarly, consider the reverse transformation from $G'$ to $G$. That is, all the weights of edges in $G'$ is $\frac{1}{\alpha}$ times in the ones in $G$. So we have $\frac{1}{\alpha} l(u',v') \geq g(u,v)$, which is equivalent to $l(u',v') \geq \alpha g(u,v)$. Combining with $g(u',v') \leq \alpha l(u,v)$, we conclude that $g(u',v') =\alpha l(u,v)= \alpha g(u,v)$. So property 5 holds.
			
			Suppose $G$ is an undirected graph, then by the commutativity and associativity of the addition operator, $g(u,v) = g(v,u)$. In other words, the order to add the weights of the edges compounding the shortest path does not change the final result. 
			
			To sum up, SDF is an RSMFTN. 
			%if $u$ and $v$ coincide, the shortest distance between them is 0. If there is no path between $u$ and $v$, then the shortest distance can be considered infinity. Therefore,  SDF is an RSM.
			\end{proof}
	\end{example}
	% There is a corollary from the proof itself. to show $f(u',v') = \alpha f(u,v)$ we only need to show $f(u',v') \leq \alpha f(u,v)$

	% Then the definition of communities used for later discussion can be introduced.	

\subsection{Similarity Network}
	\subsubsection{Similarity function}
	In order to make the problem easier to discuss, we need to give some fundamental definitions at first. We name the objects that have similarity relations \textbf{nodes}. Each node may have various \textbf{properties}. Moreover, there are some possible options for each property (for example, red, blue, yellow are possible options for property colour), and we name these options \textbf{cases}.
	
	The similarity network concerns the property similarity of nodes. We assume that, for a certain problem, the set of properties is fixed and for each property, there exists a similarity function that maps the case pairs to real numbers. Intuitively, the a measure of similarity should not be negative. Thus, we assume that the similarity function is non-negative. Besides, to keep the consistency with the definition of SRSM given in Section~\ref{SRSM}, the function value increases while the similarity decreases. Moreover, for some objects $A$ and $B$, if $A$ is similar to $B$, then $B$ is also similar to $A$. Then we formalize the preliminary idea:

	\begin{definition}
		Let $P$ be some property of the nodes and $C_P$ the set of possible cases of $P$, then we say $s$ is a \textbf{similarity function} $s: P \times P \rightarrow \mathbb{R}$ if for $(\gamma_1,\gamma_2) \in C_P \times C_P$
		\begin{enumerate}
			\item $s(\gamma_1,\gamma_2) \geq 0$ (non-negativity)
			\item $s(\gamma_1,\gamma_2) = 0$ if and only if $p_1 = p_2$ (coincidence axiom)
			\item $s(\gamma_1,\gamma_2) = s(\gamma_2,\gamma_1)$ (symmetry)
		\end{enumerate}
	\end{definition}

	Since more than one properties $P_1, P_2, \cdots$ might be considered in general, we need to define a list of similarity functions $s_{P_1}, s_{P_2}, \cdots$. For convenience, we write them in matrix forms. That is, $[P_1 \ P_2 \cdots]$ and $[s_{P_1} \ s_{P_2} \cdots]$.
	  
	Two lists of similarities are not measurable. And the properties may not be of the identical importance. Thence, we need a function to translate a list of parameters to an index. Traditionally, we call the function manipulating the importances of a list factors \textbf{weight function}. So we have the following definition:
	\begin{definition}[Weight function]
		Suppose $N$ is the set of nodes under the consideration in some problem. Let $P_1, \ P_2 \ \cdots$ denote the properties. Besides, $[s_{P_1} \ s_{P_2} \cdots]$ is the list of the corresponding similarity functions.	A function of functions $w$ mapping $[s_{P_1} \ s_{P_2} \cdots]$ to a non-negative function $f$ is called a weight function.
	\end{definition}
	
	\begin{remark}\label{weightfuntionremark}
	The choice of the weight function depends on the practical problem we try to solve. A trivial weight function is just a list of weights. In more details, suppose $[s_{P_1} \ s_{P_2} \cdots s_{P_n}]$ is a list of similarity functions. Let $\alpha_1, \alpha_2 \cdots, \alpha_n$ be the weights indicating the importances. Then $w = [\alpha_1 \ \alpha_2 \cdots \ \alpha_n]$ can be a potential weight function. And $f$ is $$[\alpha_1 \ \alpha_2 \cdots \ \alpha_n] \times [s_{P_1} \ s_{P_2} \cdots s_{P_n}]^T = \sum^n_{i = 1} \alpha_i \cdot s_{P_i},$$ which is a non-negative function.
	\end{remark}
	
	\begin{example}
		Suppose we take two properties $P_1$ and $P_2$ into account. Besides, we have the possible cases $\gamma_1, \gamma_2, \gamma_3$ for $P_1$ and $\zeta_1, \zeta_2$ for $P_2$. For two nodes $u$ and $v$, assume $u$ has properties $\gamma_1$ and $\zeta_2$, and $v$ has properties $\gamma_3$ and $\zeta_1$. Let $s_{P_1}$ and $s_{P_2}$ be the similarity function we created for $P_1$ and $P_2$. If we use the way in remark \ref{weightfuntionremark} to define $w$, we have $w = [\alpha_1 \hspace{0.5em} \alpha_2]$. So we have $f(s,t) = \alpha_1 s_{P_1}(\mbox{s's $P_1$},\mbox{t's $P_1$})+\alpha_2 s_{P_2}(\mbox{s's $P_2$},\mbox{t's $P_2$})$. In particular, for nodes $u$ and $v$, we have $f(u,v) = \alpha_1\cdot s_{P_1}(\gamma_1, \gamma_3)+\alpha_2\cdot s_{P_2}(\zeta_2,\zeta_1)$. 
	\end{example}

	\subsubsection{Relation strength measurement for similarity network}
	
	The weight function can generate a function $f$ to measure the similarity of a pair of nodes. However, the weight function here has no guarantee that $f$ always gives the measurement following our intuition. In particular, we require $f$ satisfies the following properties:
	
	Suppose $N$ is the set of nodes under the consideration and $P$ the set of properties. Then for $u$, $v$ and $w$ in $N$, we have, 
	\begin{enumerate}
		\item $f(u,v) \geq 0$
		\item $f(u,v) = 0$ if and only if $u$ and $v$ have the exactly same cases for all properties in $P$
		\item $f(u,v) \leq f(u,w) + f(w,v)$
		\item $f(u,v) = f(v,u)$
	\end{enumerate}
	
	The first two properties are inherited from the similarity function. Since the similarity relation should be symmetric (that is, if $A$ is similar to $B$, then $B$ is also similar to $A$), so we have property $4$. Besides, property $3$ shows that the direct measurement of any pair of nodes is at least not greater than the sum of the ones with an intermediate point. Since this function is defined for similarity measurement, we name it the \textbf{relation strength measurement for similarity network (RSMFSN)}.
	
	\begin{remark}
		In other words, $f$ is a distance function. In fact, the example we give in remark \ref{weightfuntionremark} is an RSMFSN.
	\end{remark}

	\subsection{Relations between similarity network and transmission network}
	In many cases, there are very strong relations between similarity networks and transmission networks. A typical example is the pathogen infection among some species. If we consider the DNA similarity among organisms. It is easier for some certain pathogen to infect organisms that have similar DNAs. Or in other words, the easiness of pathogen transmission has a positive relationship with the DNA similarity. Therefore, the relative relation strength among the organisms should be similar whichever relation type we consider here.
	
	Since both RSMFTN and RSMFSN are used to measure the relations among the nodes in networks, and the follow-up propositions are based on their shared properties, we call both two relation measurements \textbf{relation strength measurement (RSM)} in the sequel. 
	
\section{Communities}\label{secComm}
	
		After defining RSM, the definition of communities can be derived. Ahead of giving a formal definition, an important problem needs to be discussed. That is, the community relation's transitivity. In other words, if $A$ and $B$ are contained in one community and so are $B$ and $C$, can we also say $A$ and $C$ are in one community? In general, this implication is not true. A typical counterexample is ``your friend's friends may not be your friends". So all relation strength between any pair of nodes should be considered when we define a community. Moreover, since only the groups of nodes having mutually strong enough relations are considered as communities, a relation strength threshold (\textbf{community parameter}) needs to be designated.
		
		% The definition of communities
		\begin{definition}[Community]\label{defnCom}
			Suppose $W^G \subseteq V_G$ for some directed graph $G$, $\epsilon \in \mathbb{R^*}$, and $g$ is some RSM. Then $W^G$ is a community with respect to RSM $g$ and constant $\epsilon$ if and only if for all $(u,v) \in W^G \times W^G$. $ g(u,v) \leq \epsilon$. Moreover, we say $\epsilon$ is the community parameter (CP) of $W^G$ with respect to $g$. If there is no ambiguity of the choice of RSM and CP, we will briefly say $W^G$ is a community.
		\end{definition}

		Since CP gives a threshold of the relation strength, whichever pair of nodes we choose in a community, the relation strength of the pair cannot be weaker than the ones the CP represents. So for those problems considering the worst cases, the CP can be designated according to some CM with some certain threshold (in Section~\ref{TRC}). Then the inner structure of the community can be ignored since the poorest performance of the community satisfies the requirement. In other words, a community can be considered as a relatively independent entity, and the CP is a global property of it. 
	
		\begin{remark}
		Notice that the definition is based on the set of vertices instead of the subgraph used in many other papers. Besides, it is worthy to emphasize that the choice of communities usually consider the whole graph's topology rather than the local one (this shows that the community is some higher level structure based on the original graph). Since the results might be different for various choices of graph topology, the superscripts are used to make the description clear (for example, $W^G$ means $W$ is a vertex set and graph $G$ is the working topology).
		\end{remark}
			
		\section{Propositions and detection algorithm for communities}\label{secProp}
			Based on the definition of RSM, an adjoint complete graph can be derived for recording all the relation strengths. More accurately, the weights of edges in the adjoint graph is determined by the corresponding RSM. 
			\begin{definition}[Adjoint complete digraph]
			Suppose $G$ is some directed graph and $g$ is some RSM. Let $E = V_G \times V_G$ be a new set of edges whose weights are assigned by $g$. Then the adjoint complete digraph $adj(G,g):=\{V_G, E\}$. 
			\end{definition}

			The definition of communities uses CP to give a threshold of the relation strength. That means, if the relation strength is not strong enough, the relation is ignored during the community detection. Moreover, for any pair of nodes, the definition of communities requires the enough strengths of the relations in both two directions. Hence, we can remove the relations unsatisfying the requirement to simplify our graph without changing the result of community detection. With this trick in mind, we have the following transformation. 

			\begin{definition}[Refinement transformation]
			Suppose $G$ is a directed weighted graph, $D_G \subseteq E_G$, and $\epsilon \in \mathbb{R}^*$ is some CP. The refinement transformation is defined like this.
				$$R_{\epsilon}(D_G):= \{(u,v)\in D_G: |(u,v)| \leq \epsilon \wedge |(v,u)| \leq \epsilon \}$$ 
			Besides, all the edges' weights are set to 1 after applying the transformation.
			\end{definition}

			\begin{remark}
			For convenience, the relations $(u,v)$ and $(v,u)$ are together denoted $u \leftrightarrow v$. In this case the weight is not applicable.
			\end{remark}

			The definition of refinement transformation shows that if some edges $(u,v)$ is in $R_{\epsilon}(D_G)$, then so is $(v,u)$. Moreover, the weights become unnecessary since they all equal one. Therefore, in $R_{\epsilon}(D_G)$, there is no need for us to consider the directions and weights of the edges anymore. So, for now on, $R_{\epsilon}(D_G)$ is thought of a set of undirected unweighted edges. Moreover, if $(u,v) \in R_{\epsilon}(D_G)$, then we say the relation between $u$ and $v$ is reserved. Or briefly, $u \leftrightarrow v$ is reserved.

			In fact, the refinement transformation is a higher order function that applies a Boolean function to each relation in the set of edges. The Boolean function here determines whether the given relation is strong enough to be considered in the community detection. So for a certain refinement transformation, the reservation of the relation depends on the strength of the relation itself rather than the topology in which the relation is.
	
			\begin{lemma}\label{refinementPreserve}
			Suppose $G$ is some directed graph.	If $F_1 \subseteq F_2 \subseteq E_G$, then $R_{\epsilon}(F_1) \subseteq R_{\epsilon}(F_2)$. 
			\end{lemma}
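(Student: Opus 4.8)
The plan is to unwind the definition of the refinement transformation and observe that the membership condition for $R_\epsilon(D_G)$ splits into two logically independent parts: one part that depends on the edge subset $D_G$ (whether the pair lies in $D_G$), and one part that depends only on the ambient graph $G$ (the two weight inequalities $|(u,v)| \le \epsilon$ and $|(v,u)| \le \epsilon$, which reference the weights assigned in $G$, not in any subset). Since only the first part varies with the subset, and it varies monotonically in $D_G$, the whole filtered set should inherit monotonicity.

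Concretely, I would take an arbitrary element $(u,v) \in R_\epsilon(F_1)$. By definition this means $(u,v) \in F_1$, $|(u,v)| \le \epsilon$, and $|(v,u)| \le \epsilon$. From the hypothesis $F_1 \subseteq F_2$ we get $(u,v) \in F_2$. The two weight inequalities are statements about the weights of the edges $(u,v)$ and $(v,u)$ in $G$ and are therefore untouched by replacing $F_1$ with $F_2$. Hence all conditions in the definition of $R_\epsilon(F_2)$ are satisfied, so $(u,v) \in R_\epsilon(F_2)$. As $(u,v)$ was arbitrary, $R_\epsilon(F_1) \subseteq R_\epsilon(F_2)$.

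The only point that deserves a moment's care — and it is the closest thing to an obstacle in such a short argument — is the bookkeeping convention that after the transformation all edge weights are reset to $1$: one must check that this post-processing does not interfere with the inclusion. It does not, because the weight reset is applied to the output of $R_\epsilon$, whereas the inequalities $|(u,v)| \le \epsilon$ and $|(v,u)| \le \epsilon$ in the definition are evaluated against the original weights in $G$; both $R_\epsilon(F_1)$ and $R_\epsilon(F_2)$ are formed using those same original weights, so the comparison between the two output sets is unaffected. Equivalently, one may recall the remark following the definition that $R_\epsilon(F_1)$ and $R_\epsilon(F_2)$ are to be regarded as undirected, unweighted edge sets, and set containment of those is precisely what the first two paragraphs establish. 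No induction, no case analysis, and no appeal to any property of the RSM $g$ is required.
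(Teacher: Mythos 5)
Your proof is correct and follows essentially the same element-chasing argument as the paper: pick an arbitrary $(u,v) \in R_{\epsilon}(F_1)$, note that membership in $F_2$ follows from $F_1 \subseteq F_2$ while the weight conditions depend only on $G$, and conclude. Your additional remarks about the weight-reset convention are a harmless elaboration of a point the paper leaves implicit.
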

			\begin{proof}
			Pick $(u,v) \in R_{\epsilon}(F_1)$ arbitrary. So $u\leftrightarrow v$ is in $F_1$ and reserved after applying the refinement transformation. Since $F_1 \subseteq F_2$, then $u\leftrightarrow v \in F_2$. So $(u,v) \in R_{\epsilon}(F_2)$.
			\end{proof}

			Then the adjoint complete digraph can derive a simplified undirected unweighted graph whose edges represent the two-direction relations strong enough to construct communities.
	
			\begin{definition}[Effective edge graph]
			Suppose $G$ is some directed graph. $g$ is some RSM. $\epsilon$ is some CP. Then the effective edge graph is $(V_G, R_{\epsilon}(E_{adj(G,g)}))$ and denoted by $eeg_{(g,\epsilon)}(G)$. Moreover, suppose the vertices set $A^G$ is a subset of $V_G$. The full subgraphs of $eeg_{(g,\epsilon)}(G)$ over $A^G$ is denoted by $eeg_{(g,\epsilon)}(G)[A^G]$.
			\end{definition}
  
			\begin{lemma}\label{3rd definition of communities}
			Let $g$ be some RSM, $\epsilon \in \mathbb{R^*}$ some CP and $G$ some directed weighted graph. Assume $W^G \subseteq V_G$. Then the vertices set $W^G$ is a community if and only if  for all $u,v \in W^G$, $u \leftrightarrow v$ is reserved after applying the refinement transformation. 
			\end{lemma}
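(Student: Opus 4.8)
The plan is to prove both implications by directly unwinding the definitions involved; no topological reasoning is needed, because, as the text already observes, the refinement transformation applies a purely local Boolean test to each edge, and the adjoint digraph carries the RSM values verbatim on its edges. So first I would record the one translation that makes the whole argument mechanical. By the definition of the adjoint complete digraph $adj(G,g)$, its edge set is all of $V_G\times V_G$ and the weight $|(u,v)|$ equals $g(u,v)$. Hence, for $u,v\in V_G$, the relation $u\leftrightarrow v$ is reserved, i.e. $(u,v)\in R_{\epsilon}(E_{adj(G,g)})$, if and only if $g(u,v)\le\epsilon$ \emph{and} $g(v,u)\le\epsilon$. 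I would also note that the edge $(u,v)$ is always present in $E_{adj(G,g)}$ since the adjoint digraph is complete, so there is no ``missing edge'' case to handle; and when $u=v$ the coincidence axiom gives $g(u,u)=0\le\epsilon$ (using $\epsilon\in\mathbb{R^*}$), so $u\leftrightarrow u$ is automatically reserved.

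For the forward direction, suppose $W^G$ is a community with respect to $g$ and $\epsilon$. Fix any $u,v\in W^G$. Since both $(u,v)$ and $(v,u)$ lie in $W^G\times W^G$, Definition~\ref{defnCom} yields $g(u,v)\le\epsilon$ and $g(v,u)\le\epsilon$, which is exactly the condition from the previous paragraph for $u\leftrightarrow v$ to be reserved.

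For the converse, suppose $u\leftrightarrow v$ is reserved for every $u,v\in W^G$. Take an arbitrary ordered pair $(u,v)\in W^G\times W^G$. Reservation of $u\leftrightarrow v$ gives in particular $g(u,v)\le\epsilon$. As $(u,v)$ was arbitrary, $W^G$ satisfies the defining condition of Definition~\ref{defnCom} and is therefore a community.

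Since the whole proof is a chain of ``if and only if'' unfoldings, I do not expect a genuine obstacle; the only point deserving a moment's care is ensuring both orientations $g(u,v)$ and $g(v,u)$ are accounted for. This is precisely why the symmetric bookkeeping built into $R_{\epsilon}$ — keeping $(u,v)$ only when $|(u,v)|\le\epsilon$ \emph{and} $|(v,u)|\le\epsilon$ — matches the community definition's quantification over \emph{all} ordered pairs of $W^G\times W^G$. If $g$ were symmetric this distinction would collapse, but the argument above does not rely on that.
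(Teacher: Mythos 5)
Your proof is correct and follows essentially the same route as the paper's: both directions are just definition unwinding, matching the community condition $g(u,v)\le\epsilon$ for all ordered pairs against the two-sided test $|(u,v)|\le\epsilon \wedge |(v,u)|\le\epsilon$ built into $R_\epsilon$ applied to the (complete) adjoint digraph. Your version is in fact more explicit than the paper's rather informal argument, particularly in pinning down that ``reserved'' means membership in $R_{\epsilon}(E_{adj(G,g)})$ and in handling the $u=v$ case via the coincidence axiom.
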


			\begin{figure*}[!t]
			\centering
			\includegraphics[width=0.8\textwidth]{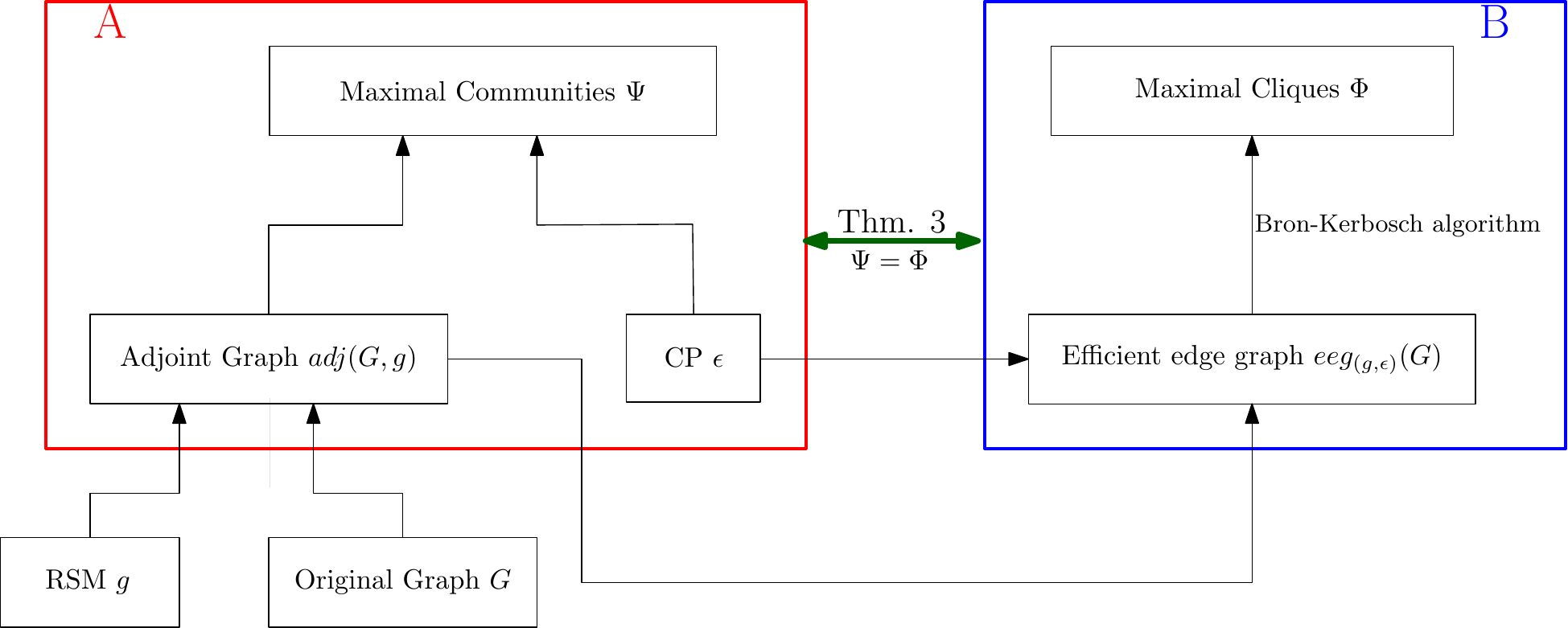}
				\caption{The Concept Graph of The Algorithm}\label{propStructure}
			\end{figure*}

			% This lemma shows that the community is relatively independent from the edge structure.
			\begin{proof}
				The refinement transformation will remove all the relations that cannot be used in a community structure. In other words, if all relations are reserved after applying refinement transformation, the relation in any pair of nodes is strong enough. This is exactly what the definition of communities requires. So, $W^G$ is a community. On the other hand, if $W^G$ is a community, the relation (in both directions) in any pair of nodes should be strong enough. Thus, all of them are reserved after applying the refinement transformation.
			\end{proof}
		
			\begin{lemma}
				Any full subgraph of a complete graph is again complete.
			\end{lemma}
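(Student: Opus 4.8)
The plan is to unwind the definitions of \emph{complete graph} and \emph{full subgraph} and verify directly that the defining property of completeness is inherited. Recall that a graph is complete when every pair of distinct vertices is joined by an edge, and that the full subgraph over a vertex subset retains exactly those edges of the ambient graph whose two endpoints both lie in that subset. So the whole content of the lemma is the observation that passing to a full subgraph discards vertices but never discards an edge whose endpoints survive.

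Concretely, I would fix a complete graph $H = (V_H, E_H)$ and a subset $A \subseteq V_H$, and write $H[A]$ for its full subgraph, so that $V_{H[A]} = A$ and $E_{H[A]} = \{(u,v) \in E_H : u \in A \wedge v \in A\}$. Then I would take an arbitrary pair of distinct vertices $u, v \in A$. Since $A \subseteq V_H$ and $H$ is complete, the edge joining $u$ and $v$ lies in $E_H$; because both $u$ and $v$ belong to $A$, that edge satisfies the membership condition defining $E_{H[A]}$, and hence it lies in $E_{H[A]}$ as well. (If one wants the directed version, the same argument applies to each of $(u,v)$ and $(v,u)$ separately.)

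Since $u$ and $v$ were an arbitrary pair of distinct vertices of $H[A]$, every such pair is joined by an edge in $H[A]$, which is exactly the definition of completeness; therefore $H[A]$ is complete, proving the lemma. The only point that deserves even a moment's attention is being explicit that \emph{full} means we keep \textbf{all} edges with both endpoints inside $A$ — this is what rules out the loss of an edge between two retained vertices — so there is no genuine obstacle here; the statement is essentially definitional and the proof is a single line once the definitions are spelled out.
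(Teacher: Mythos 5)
Your proof is correct and follows essentially the same route as the paper's: pick an arbitrary pair of vertices in the subset, use completeness of the ambient graph to obtain the edge, and use the definition of a full (induced) subgraph to conclude that edge survives. The only difference is presentational — you spell out the definitions more explicitly than the paper does.
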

			\begin{proof}
				Suppose $S$ is a full subgraph of $C_n$ for some $n$. Then for arbitrary vertices $u$ and $v$ in $S$, edge $(u,v) \in E_{C_n}$. So by the definition of full subgraphs, $(u,v) \in E_{S}$. That is, there is an edge in an arbitrary pair of nodes in $S$. So $S$ is a complete graph.
			\end{proof}
				
			\begin{theorem}\label{4th definition of communities}
				Suppose $G$ is some directed graph, $g$ some RSM and $\epsilon$ some CP. $A^G \subseteq V_G$ is a community if and only if $eeg_{(g,\epsilon)}(G)[A^G]$ is complete. 
			\end{theorem}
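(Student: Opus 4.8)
The plan is to unwind the definitions and reduce everything to Lemma~\ref{3rd definition of communities}, which already characterises communities in terms of which relations are reserved by the refinement transformation. The bridge I need is the observation that, by construction, the edge set of $eeg_{(g,\epsilon)}(G)$ is exactly $R_{\epsilon}(E_{adj(G,g)})$, so the phrase ``$u\leftrightarrow v$ is reserved'' is synonymous with ``$\{u,v\}$ is an edge of $eeg_{(g,\epsilon)}(G)$''; and since $eeg_{(g,\epsilon)}(G)[A^G]$ is the \emph{full} subgraph over $A^G$, for vertices $u,v\in A^G$ that edge lies in $eeg_{(g,\epsilon)}(G)$ if and only if it lies in $eeg_{(g,\epsilon)}(G)[A^G]$. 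Chaining these equivalences is essentially the whole argument.

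For the forward direction, I would suppose $A^G$ is a community. By Lemma~\ref{3rd definition of communities}, for every $u,v\in A^G$ the relation $u\leftrightarrow v$ is reserved, i.e. $\{u,v\}\in R_{\epsilon}(E_{adj(G,g)}) = E_{eeg_{(g,\epsilon)}(G)}$. Restricting to distinct $u,v\in A^G$ and invoking the definition of full subgraph, $\{u,v\}$ is then an edge of $eeg_{(g,\epsilon)}(G)[A^G]$. Hence every pair of distinct vertices of $A^G$ is joined by an edge in $eeg_{(g,\epsilon)}(G)[A^G]$, which is precisely the statement that this graph is complete.

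For the reverse direction, I would suppose $eeg_{(g,\epsilon)}(G)[A^G]$ is complete and take any $u,v\in A^G$. If $u=v$ the required relation is immediate, since $g(u,u)=0\le\epsilon$ by the coincidence axiom for RSM. If $u\neq v$, completeness gives that $\{u,v\}$ is an edge of $eeg_{(g,\epsilon)}(G)[A^G]$; because this is a full subgraph, $\{u,v\}$ is also an edge of $eeg_{(g,\epsilon)}(G)$, i.e. $u\leftrightarrow v$ is reserved. Since this holds for every pair in $A^G$, Lemma~\ref{3rd definition of communities} yields that $A^G$ is a community.

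I do not expect a genuine obstacle: the theorem is mostly a repackaging of Lemma~\ref{3rd definition of communities} in graph-theoretic language. The only points deserving care are the boundary cases (the $u=v$ pair, and $A^G$ empty or a singleton, all handled vacuously or by the coincidence axiom) and being scrupulous about the definition of ``full subgraph'' so that the passage between $eeg_{(g,\epsilon)}(G)$ and $eeg_{(g,\epsilon)}(G)[A^G]$ is justified in both directions, not merely the obvious one. Note that the preceding lemma (full subgraphs of complete graphs are complete) is not actually required for this equivalence, although it is what will later make every subset of a community again a community.
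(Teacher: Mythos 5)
Your proposal is correct and follows essentially the same route as the paper: both directions reduce to Lemma~\ref{3rd definition of communities} together with the observation that the edges of the full subgraph $eeg_{(g,\epsilon)}(G)[A^G]$ are exactly the reserved relations between vertices of $A^G$. Your version is in fact slightly more careful than the paper's (handling $u=v$ via the coincidence axiom and justifying the passage between the graph and its full subgraph explicitly), but it is the same argument.
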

			
			\begin{proof}
			
			\
			
			\noindent
			$(\Rightarrow)$
			Suppose $A^G$ is a community. Pick vertices $u$ and $v$ in $A^G$ arbitrary. Since $A^G$ is a community, then all the relations will be reserved after applying the refinement transformation. Moreover, since $adj(G,g)[A^G]$ is complete, then $u \leftrightarrow v \in E_{adj(G,g)[A^G]}$. Thus, $(u,v) \in  R_{\epsilon}(E_{adj(G,g)[A^G]})$. Since we pick $u$ and $v$ arbitrary in $A^G$, there is an edge between any pair of nodes in $eeg_{(g,\epsilon)}(G)[A^G]$. Hence, $eeg_{(g,\epsilon)}(G)[A^G]$ is complete.
				
			\noindent
			$(\Leftarrow)$
			Suppose $eeg_{(g,\epsilon)}(G)[A^G]$ is complete. Then $\forall (u,v) \in A^G \times A^G$, $(u,v) \in R_{\epsilon}(E_{adj(G,g)[A^G]})$. Since $eeg(G,g)[A^G]$ is complete, all the relations in $E_{adj(G,g)[A^G]}$ are reserved after applying the refinement function. Therefore, $A^G$ is a community.	
			\end{proof}

			It is easy to find that all single nodes can be considered as a community because they relate to themselves trivially, and RSM is zero. However, this kind of result does not follow our intuition since the community should be some set of nodes. The definition of the maximal community tackles this problem. For a better understanding of the definition, a theorem needs to be introduced first.

			\begin{theorem}\label{InnerCommunity}
				 Suppose $G$ is a directed graph, $\epsilon$ a CP and $g$ an RSM. Besides, $A^G \subseteq B^G \subseteq V_G$. If $B^G$ is a community, then so is $A^G$ 
			\end{theorem}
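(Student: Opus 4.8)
The plan is to reduce the statement directly to the defining inequality of a community. By Definition~\ref{defnCom}, saying that $B^G$ is a community with respect to $g$ and $\epsilon$ means precisely that $g(u,v) \leq \epsilon$ for every pair $(u,v) \in B^G \times B^G$. Since $A^G \subseteq B^G$, the product set $A^G \times A^G$ is contained in $B^G \times B^G$, so the inequality $g(u,v) \leq \epsilon$ is automatically inherited by every pair drawn from $A^G$. That is exactly the condition for $A^G$ to be a community with the same RSM $g$ and the same CP $\epsilon$, so the argument is essentially finished at that point.

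Alternatively, I could route the argument through the combinatorial characterization in Theorem~\ref{4th definition of communities}: $B^G$ is a community if and only if $eeg_{(g,\epsilon)}(G)[B^G]$ is complete. Restricting attention to $A^G \subseteq B^G$, the graph $eeg_{(g,\epsilon)}(G)[A^G]$ is a full subgraph of the complete graph $eeg_{(g,\epsilon)}(G)[B^G]$, hence complete by the lemma that any full subgraph of a complete graph is again complete. Applying Theorem~\ref{4th definition of communities} in the reverse direction then yields that $A^G$ is a community.

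There is essentially no obstacle here; the only point that needs a word of care is to keep the RSM $g$ and the CP $\epsilon$ fixed throughout, since the property being transferred is the monotonicity of ``being a community'' under passing to subsets, for a fixed choice of RSM and CP. I would present the first, self-contained argument as the main proof, as it does not even require the earlier theorems, and perhaps remark on the second route for the reader who prefers the graph-theoretic picture.
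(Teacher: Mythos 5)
Your proposal is correct, but your primary argument takes a genuinely different and more elementary route than the paper's. The paper proves this theorem by first invoking Theorem~\ref{4th definition of communities} to translate ``$B^G$ is a community'' into ``$eeg_{(g,\epsilon)}(G)[B^G]$ is complete,'' then arguing that the full subgraph over $A^G$ inherits completeness, and translating back --- i.e., the paper's proof is essentially your second, alternative route. Your main argument instead goes straight to Definition~\ref{defnCom}: the community condition is a universally quantified inequality over $W^G \times W^G$, and $A^G \times A^G \subseteq B^G \times B^G$, so the condition restricts to subsets for free. This is shorter, needs none of the intermediate machinery (refinement transformation, adjoint graph, effective edge graph), and makes transparent that the monotonicity is a purely logical consequence of the definition. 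What the paper's route buys is consistency with its overall narrative: it keeps all reasoning inside the $eeg$/clique picture that the detection algorithm ultimately operates on. Your instinct to present the direct argument as the main proof and mention the graph-theoretic one as a remark is sound; just keep $g$ and $\epsilon$ fixed throughout, as you already note.
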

			
			\begin{proof}
				Since $B^G$ is a community, then by Theorem \ref{4th definition of communities}, $eeg_{(g,\epsilon)}(G)[B^G]$ is complete. Besides, since $A^G \subseteq B^G$, then by Lemma~\ref{refinementPreserve}, $eeg_{(g,\epsilon)}(G)[A^G]$ is a subgraph of $eeg_{(g,\epsilon)}(G)[B^G]$. 
				
				Moreover, pick $u,v \in A^G$ arbitrary. Then $u,v \in B^G$ as well. Notice that $eeg_{(g,\epsilon)}(G)[B^G]$ is complete. So the relation $u \leftrightarrow v \in E_{adj(G,g)[B^G]}$ is reserved, which implies $(u,v)\in E_{eeg_{(g,\epsilon)}(G)[A^G]}$. Thus, $eeg_{(g,\epsilon)}(G)[A^G]$ is complete. So $A^G$ is a community as well.
			\end{proof}
			
			Theorem \ref{InnerCommunity} shows that, if $B_G$ can be considered as a community with respect to some RSM and CP, then all the subsets of $B_G$ can be considered as a community. This observation leads to the definition of maximal community.
			
			\begin{definition}[Maximal community]
				Suppose $G$ is a directed graph. RSM and CP are given. Moreover, $A^G$ is a subset of $V_G$. Then $A^G$ is a maximal community if and only if
				\begin{itemize}
					\item[1.] $A^G$ is a community, and
					\item[2.] There is no $B^G \subseteq V_G$ such that $B^G$ is a community and $A^G\subsetneq B^G$.
				\end{itemize}
			\end{definition}

			With the maximal community definition in mind, we introduce an algorithm to detect them if RSM and CP are specified. For easier explanation, we define \textbf{problem A} as this:
			
			\begin{definition}[Problem A]
				Given some adjoint graph $adj(G,g)$ and CP $\epsilon$, find all the maximal communities in $G$ (the set of the maximal communities is denoted $\Psi$).
			\end{definition}
			
			\begin{definition}[Problem B]
				Given the effective edge graph $eeg_{(g,\epsilon)}(G)$, find the all the maximal cliques in $eeg_{(g,\epsilon)}(G)$ (the set of the maximal cliques is denoted by $\Phi$).
			\end{definition}

			The following theorem shows the equivalence of problem A and problem B. 
						
			\begin{theorem}\label{maxCommunityAndMaxClique}
				Suppose $G$ is a directed graph, $\epsilon$ some CP, $g$ an RSM and $A^G$ a subset of $V_G$. Then $A^G$ is a maximal community if and only if
				$eef_{(g,\epsilon )}(G)[A^G]$ is a maximal clique in graph $eef_{(g,\epsilon)}(G)$. Therefore, $\Psi = \Phi$.
			\end{theorem}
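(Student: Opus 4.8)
The plan is to let Theorem~\ref{4th definition of communities} do essentially all of the work. That theorem already establishes that a vertex set $A^G$ is a community if and only if the induced (full) subgraph $eeg_{(g,\epsilon)}(G)[A^G]$ is complete, i.e.\ if and only if $A^G$ is a clique of the fixed ambient graph $eeg_{(g,\epsilon)}(G)$. Since ``maximal community'' and ``maximal clique'' are both defined by the same shape of condition --- being of the relevant kind and not properly contained in a larger set of the same kind --- the equivalence should reduce to a short contradiction argument in each direction. (I would also flag the apparent typo: $eef_{(g,\epsilon)}$ should read $eeg_{(g,\epsilon)}$.)

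For the forward direction I would assume $A^G$ is a maximal community. By Theorem~\ref{4th definition of communities}, $eeg_{(g,\epsilon)}(G)[A^G]$ is complete, so $A^G$ is a clique of $eeg_{(g,\epsilon)}(G)$. To see it is a \emph{maximal} clique, suppose toward a contradiction that some $B^G \subseteq V_G$ with $A^G \subsetneq B^G$ induces a complete subgraph of $eeg_{(g,\epsilon)}(G)$; then Theorem~\ref{4th definition of communities}, applied in the other direction, makes $B^G$ a community with $A^G \subsetneq B^G$, contradicting maximality of the community $A^G$. Hence $eeg_{(g,\epsilon)}(G)[A^G]$ is a maximal clique.

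For the converse I would assume $eeg_{(g,\epsilon)}(G)[A^G]$ is a maximal clique. Its completeness together with Theorem~\ref{4th definition of communities} shows $A^G$ is a community. If there were a community $B^G$ with $A^G \subsetneq B^G$, then by Theorem~\ref{4th definition of communities} the subgraph $eeg_{(g,\epsilon)}(G)[B^G]$ would be complete, i.e.\ $B^G$ would be a clique of $eeg_{(g,\epsilon)}(G)$ strictly containing $A^G$, contradicting maximality of the clique. Therefore $A^G$ is a maximal community. Finally, to get $\Psi = \Phi$, observe that in the fixed graph $eeg_{(g,\epsilon)}(G)$ a clique is determined by its vertex set, so the assignment $A^G \mapsto eeg_{(g,\epsilon)}(G)[A^G]$ identifies a vertex set with the subgraph it induces; the equivalence just proved then says precisely that a set lies in $\Psi$ exactly when its induced subgraph lies in $\Phi$, so the two collections coincide.

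I do not expect a genuine obstacle here: Theorem~\ref{4th definition of communities} already carries the content, and nothing like Lemma~\ref{refinementPreserve} or Theorem~\ref{InnerCommunity} is needed. The only real care is bookkeeping --- keeping straight the two ambient notions of maximality (maximality among all community vertex subsets of $V_G$ versus maximality among all cliques of the single graph $eeg_{(g,\epsilon)}(G)$, whose vertex set is also $V_G$) and applying the induced-subgraph operator consistently --- together with the harmless identification of a clique with its vertex set used in the very last step.
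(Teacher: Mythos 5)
Your proposal is correct and follows essentially the same route as the paper: both directions reduce to Theorem~\ref{4th definition of communities} plus a short contradiction argument on maximality, with the clique identified with its vertex set. Your observation that Lemma~\ref{refinementPreserve} is not actually needed is sound (the containment of induced subgraphs for $A^G \subseteq B^G$ is immediate from the definition of full subgraph), and the $eef$/$eeg$ typo you flag is indeed present in the paper.
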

			
			\begin{proof}
			
			\
			
			\noindent
				$(\Rightarrow)$ 
				Suppose $A^G$ is a maximal community. Then since $A_G$ is a community, then $eef_{(g,\epsilon )}(G)[A^G]$ is complete. So it is a clique. Assume $eef_{(g,\epsilon )}(G)[A^G]$ is not maximal. Then there exists some vertices set $B^G$ such that $eef_{(g,\epsilon )}(G)[A^G] \subsetneq eef_{(g,\epsilon )}(G)[B^G]$ and $ eef_{(g,\epsilon )}(G)[B^G]$ is complete. So $A^G \subseteq B^G$. Moreover, since both two graphs are complete, the equality cannot hold. Otherwise, $eef_{(g,\epsilon )}(G)[A^G] = eef_{(g,\epsilon )}(G)[B^G]$. So we have $A^G \subsetneq B^G$. Besides, since $eef_{(g,\epsilon )}(G)[B^G]$ is complete, $B^G$ is a community. Hence, $A^G$ cannot be a maximal community, which is a contradiction.
					
			\noindent
			$(\Leftarrow)$ 
			Suppose $eef_{(g,\epsilon )}(G)[A^G]$ is a maximal clique. Since $eef_{(g,\epsilon )}(G)[A^G]$ is complete, then $A^G$ is a community. Assume $A^G$ is not maximal, then there exists some community $B^G$ such that $A^G \subsetneq B^G$. Then $eef_{(g,\epsilon )}(G)[B^G]$ is complete. Moreover, we have $eef_{(g,\epsilon )}(G)[A^G] \subsetneq eef_{(g,\epsilon )}(G)[A^G]$ by Lemma~\ref{refinementPreserve}. Therefore, $eef_{(g,\epsilon )}(G)[A^G]$ cannot be maximal, which is a contradiction. 
			\end{proof}

			\begin{remark}
				Bron-Kerbosch algorithm \citep{findMaximalClique} is a well-known algorithm to find maximal cliques. Thence, by Theorem \ref{maxCommunityAndMaxClique}, we can reduce problem A to problem B and apply Bron-Kerbosch algorithm to find $\Phi$, which equals $\Psi$.
				%we can also use this algorithm to find the maximal communities as well.
			\end{remark}

			Figure~\ref{propStructure} shows the relationships among the important concepts and transformations introduced. In more details, suppose $G$ is some graph, $g$ some RSM and $\epsilon$ some CP. Moreover, all the vertices in $G$ have been indexed from $1$ to $|V_G|$. Then we have the following algorithm,
			
			\begin{algorithm}
							\begin{algorithmic}[1]
								\Procedure{findMaxiamlCommunities}{}
									\For{$(i,j) \in V_G \times V_G$ where $i \leq j$}
											\If{
												$g(i,j)\leq \epsilon \And g(j,i)\leq \epsilon$
											}
												\State $EEG[i,j] = 1$
											\Else
												\State $EEG[i,j] = 0$
											\EndIf
									\EndFor
									\State $\textit{SetOfMaximalCommunities}\hspace{0.5em}\Phi\hspace{0.5em}\gets$ 
										\\ \hfill $\text{Bron-Kerbosch algorithm(EEG)}$
								\EndProcedure
							\end{algorithmic}
			\end{algorithm}
			
	\section{Demonstration}\label{secDemo}
		In this section, we demonstrate how our new algorithm works by applying it on Zachary's karate club network \citep{Zac77}. We choose resistance distance \citep{orgResistanceDistance} as our RSM. 
		
		\subsection{The current model}
		The definition of communities indicates that some certain RSM is required. We have shown that SDF is RSMFTN in Example~\ref{SDFISRSM}, so that SDF is RSM. Although many community detection algorithms work on SDF, it may not always give a reasonable result. Intuitively, the relation of a pair nodes will get stronger if there are more paths between them. However, SDF does not consider this case (see Figure~\ref{SDFcounterExample}). More specifically, in SDF view, the relation will not get stronger unless a path shorter than the previous shortest path is added.
	
	\begin{figure}[h]
		\centering
		\begin{subfigure}[b]{0.4\columnwidth}
			\includegraphics[width=\columnwidth]{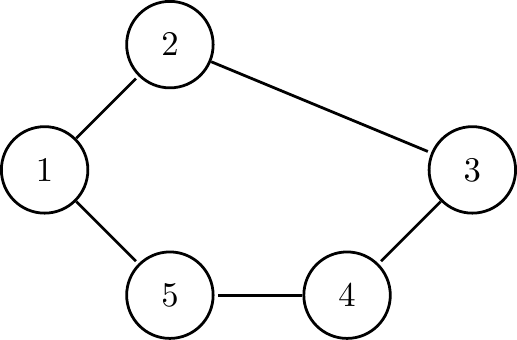}
			\caption{}
		\end{subfigure}\hspace{1em}
		\begin{subfigure}[b]{0.4\columnwidth}
			\includegraphics[width=\columnwidth]{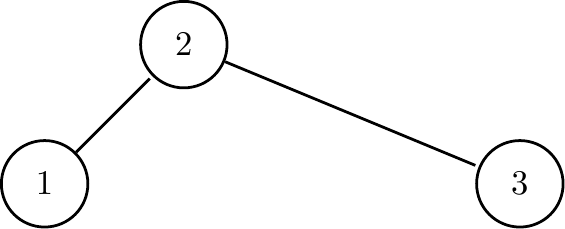}
			\caption{}
		\end{subfigure}
		\caption{Both in (a) and (b), the shortest distance between Node1 and Node3 is 2. So in SDF view, the relation strengths between Node1 and Node3 in these two cases are identical. However, there is one more path between Node1 and Node3 in (a). So, intuitively, the relation between Node1 and Node3 in (a) should be stronger than the one in (b).}\label{SDFcounterExample}
	\end{figure}
		 	
		In order to avoid this problem, we try to use the Klein and Randic's effective resistance function (ERF) \citep{orgResistanceDistance} to measure the relation strength instead of SDF. 
		
		\subsection{Klein and Randic's effective resistance}
		Suppose $G$ is an indirectly connected graph. Then $G$ can be considered as an electrical network that all the edges are resistances with the corresponding weight values (if $G$ is an unweighted graph, then the resistances of all edges are one).
		
		Let $u$ and $v$ be two vertices in the graph. Then the effective resistance of these two vertices can be defined like this:
		\begin{definition}
			Let the voltage of $u$ be $U$ and the one of $v$ be $0$. We can measure the current $I$ from $u$ to $v$. Then the efficient resistance $R(u,v)$ between $u$ and $v$ is $\frac{U}{I}$. Briefly, $R(u,v)=\frac{U}{I}$.
		\end{definition}

		\subsubsection{Algorithm to get efficient resistance distance}
		Klein and Randic \citep{orgResistanceDistance} also provide an algorithm to compute the resistance distance for a connected indirect weighted graph.
		
		Suppose graph $G$ is connected. Let $A$ be the adjacent matrix and $D$ the diagonal degree matrix of $G$. It is worthy to note that, in a weighted indirected graph, the degree of a vertex is the sum of the weights of all its adjacent edges. Then the Laplacian matrix $L$ can be computed using formula $L = D - A$. Let $L^{\dag}$ be the generalized inverse \citep{generalizedInverse} of $L$. %We use the real inverse of L but we didn't prove it is always invertible 
		Then the efficient resistance distance $R_{i,j}$ of any pair of vertices $(i,j)$ in graph $G$ can be obtained by $$R_{i,j} = L^{\dag}_{i,i}+L^{\dag}_{j,j}-2L^{\dag}_{i,j}$$
		And we usually call the corresponding matrix $R$ \textbf{resistance matrix}.

		\subsubsection{ERF is an RSMFTN}
		Since the definition of community is based on RSM, we have to prove ERF is an RSM first. Essentially, in this case, the relations among the nodes are derived from the electron flow in the wires among the vertices. So we need to consider the criteria of RSMFSN.		
		
		\begin{lemma}\label{resistance is distance}
			Resistance is distance. That is, the resistance satisfies the following properties:
			\begin{enumerate}
				\item $R_{a,b} \geq 0$
				\item $R_{a,b} = 0 \Leftrightarrow a = b$
				\item $R_{a,b} = R_{b,a}$
				\item $R_{a,c} + R_{c,b} \geq R_{a,b}$
			\end{enumerate}
		\end{lemma}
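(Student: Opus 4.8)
The plan is to work from the closed-form expression $R_{i,j}=L^{\dag}_{i,i}+L^{\dag}_{j,j}-2L^{\dag}_{i,j}$ stated just above the lemma, first rewriting it as the quadratic form
$$R_{i,j}=(e_i-e_j)^{\top}L^{\dag}(e_i-e_j),$$
where $e_i$ denotes the $i$-th standard basis vector. The only structural facts I need are that, for a connected graph, $L=D-A$ is symmetric and positive semidefinite with $\ker L=\operatorname{span}\{\mathbf 1\}$, and hence $L^{\dag}$ is symmetric positive semidefinite with the same kernel. I would state these as a preliminary observation and then treat the four properties in order.

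Properties 1--3 then follow immediately. Non-negativity is positive semidefiniteness of $L^{\dag}$ evaluated at $e_i-e_j$; symmetry is symmetry of $L^{\dag}$. For the coincidence axiom, $R_{a,a}=0$ is clear, and conversely if $R_{a,b}=0$ then $(e_a-e_b)^{\top}L^{\dag}(e_a-e_b)=0$ forces $L^{\dag}(e_a-e_b)=0$ (for a positive semidefinite matrix $M$, $v^{\top}Mv=0$ implies $Mv=0$), so $e_a-e_b\in\ker L^{\dag}=\operatorname{span}\{\mathbf 1\}$; since $e_a-e_b$ has coordinates summing to zero, this is possible only if $e_a=e_b$, i.e. $a=b$.

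The substantive step is the triangle inequality. I would first establish the algebraic identity
$$R_{a,c}+R_{c,b}-R_{a,b}=2\,(e_a-e_c)^{\top}L^{\dag}(e_b-e_c),$$
obtained by expanding each term as a quadratic form in $u=e_a-e_c$ and $v=e_b-e_c$ and using $e_a-e_b=u-v$. It then suffices to show the right-hand side is non-negative, and for this I would pass to the electrical-network reading: let $\phi:=L^{\dag}(e_b-e_c)$ be the vector of node potentials obtained by driving one unit of current into the network at $b$ and out at $c$, so that $L\phi=e_b-e_c$ and the right-hand side above equals $2(\phi_a-\phi_c)$. Kirchhoff's current law gives $(L\phi)_i=0$ for every node $i\notin\{b,c\}$, i.e. $\phi_i$ is the degree-weighted average of its neighbours' potentials there; the discrete maximum principle then shows $\phi$ attains its maximum at $b$ and its minimum at $c$ (an extremal node lying in neither $\{b,c\}$ would, by connectedness, force $\phi$ to be constant, contradicting $L\phi=e_b-e_c\neq 0$). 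Hence $\phi_c\le\phi_a\le\phi_b$, so $R_{a,c}+R_{c,b}-R_{a,b}=2(\phi_a-\phi_c)\ge 0$.

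The main obstacle is precisely this last step: converting ``$L^{\dag}$ is positive semidefinite'' into control of the sign of the \emph{off-diagonal} form $(e_a-e_c)^{\top}L^{\dag}(e_b-e_c)$, which does not follow from positive semidefiniteness alone and genuinely uses the graph/Laplacian structure (via the maximum principle). An alternative I considered is Thomson's principle --- $R_{a,b}$ is the minimum energy $\sum_e r_e\theta_e^{\,2}$ over unit $a$--$b$ flows --- and summing an optimal $a$--$c$ flow with an optimal $c$--$b$ flow; the triangle inequality of the energy norm yields only $\sqrt{R_{a,b}}\le\sqrt{R_{a,c}}+\sqrt{R_{c,b}}$, which is strictly weaker than property~4, so I would keep the potential/maximum-principle argument above. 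As a sanity check, property~4 can also be verified directly from the formula on small graphs (a path, a triangle).
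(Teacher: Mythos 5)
The paper never actually proves this lemma --- it defers entirely to Klein and Randi\'c's original article --- so your self-contained argument is supplying a proof the paper omits. Your route is the classical potential-theoretic one: write $R_{i,j}=(e_i-e_j)^{\top}L^{\dag}(e_i-e_j)$, get properties 1--3 from symmetry and positive semidefiniteness of the Moore--Penrose inverse together with $\ker L^{\dag}=\operatorname{span}\{\mathbf 1\}$, and reduce the triangle inequality via polarization to the sign of $(e_a-e_c)^{\top}L^{\dag}(e_b-e_c)=\phi_a-\phi_c$ for the potential $\phi=L^{\dag}(e_b-e_c)$. The polarization identity is correct, and you rightly identify that positive semidefiniteness alone cannot control this off-diagonal term and that the harmonicity of $\phi$ away from $b$ and $c$ is the genuine input; the aside that Thomson's principle only yields $\sqrt{R_{a,b}}\le\sqrt{R_{a,c}}+\sqrt{R_{c,b}}$ is also accurate. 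One caveat: the paper defines $L^{\dag}$ only by $LL^{\dag}L=L$, whereas your argument uses the symmetric positive semidefinite (Moore--Penrose) choice; that is the standard reading and harmless, but should be said explicitly.

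The one step that needs repair is the parenthetical justifying the maximum principle: it is not true that an extremal node outside $\{b,c\}$ forces $\phi$ to be constant. A leaf attached to $b$ carries no current, so it attains the same maximal potential as $b$ while $\phi$ is nonconstant. What you actually need, and what is true, is that $\phi_b=\max\phi$ and $\phi_c=\min\phi$. Argue it as follows: let $S$ be the set of maximizers of $\phi$. Then $c\notin S$, since $(L\phi)_c=\sum_{j\sim c}w_{cj}(\phi_c-\phi_j)=-1$ while every summand would be nonnegative if $\phi_c$ were maximal. For any $m\in S$ with $m\notin\{b,c\}$, the identity $(L\phi)_m=\sum_{j\sim m}w_{mj}(\phi_m-\phi_j)=0$ with nonnegative summands forces every neighbour of $m$ into $S$. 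Hence if $b\notin S$, the nonempty set $S$ would be closed under taking neighbours and, by connectedness, equal to $V$, contradicting $b\notin S$. So $b\in S$, i.e.\ $\phi_a\le\phi_b$ for all $a$, and symmetrically $\phi_c\le\phi_a$, which gives $R_{a,c}+R_{c,b}-R_{a,b}=2(\phi_a-\phi_c)\ge 0$. With that substitution the proof is complete.
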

		
		\begin{lemma}\label{RDFCutNode}
			Let $x$ be a cut-point of a connected graph, and let $a$ and $b$ be points occurring in different components which arise upon deletion of $x$. Then, $$R_{a,b} = R_{a,x} + R_{x,b}$$
		\end{lemma}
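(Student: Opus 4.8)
The plan is to use the electrical-network picture already attached to $G$ and exploit the fact that a cut vertex forces \emph{all} current between $a$ and $b$ to pass through $x$, so the network acts like two sub-networks in series. Half of the claimed identity is immediate: part~4 of Lemma~\ref{resistance is distance} (the triangle inequality, with $c=x$) already gives $R_{a,b} \le R_{a,x} + R_{x,b}$, so the real content is the reverse inequality, which I will in fact obtain as an equality. Fix the decomposition: let $G_a$ be the subgraph spanned by $x$, the component of $G - x$ containing $a$, and all edges of $G$ among those vertices; define $G_b$ symmetrically with $b$, lumping any further components of $G-x$ into $G_b$. Then $G_a$ and $G_b$ are connected, $G_a \cup G_b = G$, and $V_{G_a} \cap V_{G_b} = \{x\}$.

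The core step is a current-flow computation. Drive a unit current in at $a$ and out at $b$ and let $\varphi$ be the resulting potential (harmonic away from $\{a,b\}$, unique up to an additive constant). Since $\{x\}$ is a vertex cut separating $V_{G_a}\setminus\{x\}$ from $V_{G_b}\setminus\{x\}$ and the only edges leaving the $a$-side are incident to $x$, Kirchhoff's current law forces exactly one unit of current through $x$. Hence $\varphi|_{G_a}$ is the potential of a unit flow from $a$ to $x$ inside $G_a$ and $\varphi|_{G_b}$ that of a unit flow from $x$ to $b$ inside $G_b$, so reading off voltage drops gives $R_{a,b} = \varphi(a) - \varphi(b) = \bigl(\varphi(a)-\varphi(x)\bigr) + \bigl(\varphi(x)-\varphi(b)\bigr) = R^{G_a}_{a,x} + R^{G_b}_{x,b}$, the superscripts denoting resistance computed within the respective subgraph.

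To finish I would identify $R^{G_a}_{a,x}$ with the resistance $R_{a,x}$ computed in all of $G$, and similarly for $R^{G_b}_{x,b}$. Compute $R_{a,x}$ in $G$ by injecting a unit current at $a$ and extracting it at $x$: the piece $G_b\setminus\{x\}$ carries no source or sink and is joined to everything else only through $x$, so its net current exchange with $x$ is zero; a connected resistive network attached to the rest of the world through a single terminal of zero net current sits at constant potential, so no current flows in $G_b$ and the potential restricted to $G_a$ already realizes $R_{a,x}$. Thus $R_{a,x} = R^{G_a}_{a,x}$ and, symmetrically, $R_{x,b} = R^{G_b}_{x,b}$, and substituting into the previous display yields $R_{a,b} = R_{a,x} + R_{x,b}$.

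The step needing the most care is the ``no current leaks into a dangling subnetwork'' assertion used here and in the core step: it should be justified via uniqueness of the harmonic extension with prescribed boundary currents --- equivalently, that the only function on $G_b$ harmonic at every interior vertex with zero net current at its sole terminal $x$ is constant --- together with the bookkeeping that permits absorbing extra components of $G-x$ into $G_b$. If a slicker write-up is wanted, one can instead bound $R_{a,b} \ge R^{G_a}_{a,x} + R^{G_b}_{x,b}$ using Thomson's minimum-energy principle on the restricted flows, combine it with Rayleigh monotonicity ($R_{a,x}\le R^{G_a}_{a,x}$ since $G_a$ is a subgraph of $G$) and the triangle inequality above, and note the three inequalities can hold together only as equalities; a purely algebraic alternative is to order the vertices so that $L$ becomes bordered block-tridiagonal with $x$'s row and column in the middle and push the identity through the formula $R_{i,j}=L^{\dag}_{i,i}+L^{\dag}_{j,j}-2L^{\dag}_{i,j}$, but that is more computation than insight.
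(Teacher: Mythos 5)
Your argument is correct, but note that the paper does not actually prove this lemma: it is quoted verbatim from Klein and Randi\'c \citep{orgResistanceDistance} and the Remark following it explicitly defers the proof to that reference, so there is no in-paper proof to compare against. Your series-circuit derivation (all current between $a$ and $b$ is forced through the cut vertex $x$, the potential restricted to each side is the unit-flow potential for the corresponding terminal pair, and the dangling side carries no current when computing $R_{a,x}$ by the maximum principle for harmonic functions) is essentially the classical argument from that reference, and the one delicate step --- that a subnetwork attached through a single terminal with zero net current sits at constant potential --- is the right thing to single out and is adequately justified by uniqueness/maximum-principle considerations as you indicate.
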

		
		\begin{remark}
			The proofs of Lemma~\ref{resistance is distance} and Lemma~\ref{RDFCutNode} have been given by Klein and Randic \citep{orgResistanceDistance}.
		\end{remark}

		\begin{lemma}\label{RDF Property5}
			RDF satisfies the property 5 of RSM.
		\end{lemma}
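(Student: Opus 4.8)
The plan is to argue straight from the electrical-network definition $R(u,v)=U/I$, using nothing more than the linearity of Ohm's and Kirchhoff's laws. Fix vertices $u,v$ in $G$ and let $\{I_e\}_{e\in E_G}$ be the edge currents, and $\varphi$ the node potentials, in the (unique) solution obtained by holding $u$ at potential $U$ and $v$ at potential $0$; if $I$ denotes the net current flowing out of $u$ in this solution, then $R_G(u,v)=U/I$. Now pass to $G'$, which differs from $G$ only in that every edge weight --- i.e.\ every edge resistance --- is multiplied by $\alpha$.

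First I would check that keeping the same edge currents $\{I_e\}$ and setting the node potentials to $\alpha\varphi$ is a valid electrical state of $G'$. Kirchhoff's current law at each node constrains only the $I_e$'s, so it is inherited unchanged. For Ohm's law, the potential drop along an edge $e$ of resistance $\alpha w_e$ carrying current $I_e$ is $\alpha w_e I_e=\alpha(w_e I_e)$, exactly $\alpha$ times the drop in $G$, hence consistent with the potentials $\alpha\varphi$. So this state realises potential $\alpha U$ at $u'$, potential $0$ at $v'$, and net current $I$ out of $u'$; by uniqueness of the solution of a resistive network with prescribed terminal potentials, it is \emph{the} solution, and therefore $R_{G'}(u',v')=(\alpha U)/I=\alpha\,R_G(u,v)$, which is exactly property 5. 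Alternatively, and perhaps more briefly, one can substitute into the closed form $R_{i,j}=L^{\dag}_{i,i}+L^{\dag}_{j,j}-2L^{\dag}_{i,j}$: multiplying every resistance by $\alpha$ divides every conductance, and hence the Laplacian, by $\alpha$, and the (Moore--Penrose) generalized inverse of $\tfrac{1}{\alpha}L$ is $\alpha L^{\dag}$; plugging this in multiplies every entry of $R$, hence $R_{i,j}$, by $\alpha$.

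I do not anticipate a real obstacle here --- once the correct quantity is scaled, the conclusion is a one-line computation. The single point that demands care, and the place a hasty argument can slip by a factor of $\alpha$ versus $1/\alpha$, is to keep in mind that an edge's weight is its \emph{resistance}, so scaling weights scales resistances (not conductances); with that fixed, every reduction rule --- series, parallel, Kirchhoff's laws, and the pseudoinverse formula alike --- rescales the effective resistance by the same factor $\alpha$. The only ingredient beyond elementary manipulation is the uniqueness of the current distribution in a resistive network, which is standard.
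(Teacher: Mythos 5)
Your proof is correct, and your primary argument takes a genuinely different route from the paper's. The paper works entirely through the closed form $R_{i,j}=L^{\dag}_{i,i}+L^{\dag}_{j,j}-2L^{\dag}_{i,j}$: it scales the weighted adjacency and degree matrices to get $L'=\alpha L$ and then manipulates the defining identity $LL^{\dag}L=L$ to relate $L'^{\dag}$ to $L^{\dag}$. Your first argument instead works from the definition $R(u,v)=U/I$: keep the edge currents, scale the potentials by $\alpha$, check Kirchhoff's and Ohm's laws, and invoke uniqueness of the network solution. That buys independence from the matrix formula and from any choice of generalized inverse, and it makes the direction of the scaling physically transparent. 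Your warning about resistance versus conductance is not pedantry --- it is exactly the delicate point in the paper's own computation. The paper declares edge weights to be resistances yet builds $L$ from the weighted adjacency matrix directly, so scaling the weights gives $L'=\alpha L$; its own identity $L(\alpha L'^{\dag})L=L$ then says that $\alpha L'^{\dag}$ is a generalized inverse of $L$, i.e.\ one may take $L'^{\dag}=\tfrac{1}{\alpha}L^{\dag}$, which would yield $R'_{i,j}=\tfrac{1}{\alpha}R_{i,j}$. The paper instead asserts $L'^{\dag}=\alpha L^{\dag}$ and lands on $R'_{i,j}=\alpha R_{i,j}$. That conclusion is the right one for the physical model, but it is your bookkeeping --- resistances scale by $\alpha$, hence conductances and the conductance Laplacian scale by $1/\alpha$, hence $L'^{\dag}=\alpha L^{\dag}$ --- that actually justifies it; your second, matrix-based sketch is in effect the corrected version of the paper's own route.
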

		
		\begin{proof}
			Suppose $G$ is some graph and $G'$ is same as $G$ but the edges weights in $G'$ are all $\alpha$ times greater than the ones in $G$.
			Let $A$ and $A'$ be the adjacent matrixes of $G$ and $G'$ respectively. Then we have $A' = \alpha A$. So for the corresponding degree matrixes $D$ and $D'$, we also have $D' =  \alpha D$. Therefore, we have 
			$$L' = D' - A' = \alpha D - \alpha A = \alpha (D-A) = \alpha L$$
			
			Let $L^{\dag}$ and $L'^{\dag}$ be the generalized inverse of $L$ and $L'$ respectively. Then by the definition of the generalized inverse, we have 
			\begin{equation}\label{orgLpInv}
				L L^{\dag} L = L
			\end{equation}
			\begin{equation}\label{primeLpInv}
				L' L'^{\dag} L' = L'
			\end{equation}
			Since $L' = \alpha L$, we can simplify equation \ref{primeLpInv}
			\begin{align*}
				L' L'^{\dag} L' & = L'  &(\Leftrightarrow)\\
				(\alpha L) L'^{\dag} (\alpha L') & = (\alpha L') &(\Leftrightarrow)\\
				L (\alpha L'^{\dag})  L & = L	&
			\end{align*}
			Comparing with equation \ref{orgLpInv}, $\alpha L'^{\dag}$ has the same function as $L^{\dag}$. Since the final result does not rely on the choice of the generalized inverse matrix, we can let $L^{\dag}$ be the one satisfying the equation 
			\begin{equation}
				\alpha  L^{\dag} = L'^{\dag}
			\end{equation}
			Hence, we have
			\begin{align*}
				R'_{i,j} &= L'^{\dag}_{i,i}+L'^{\dag}_{j,j}-2L'^{\dag}_{i,j}\\
				& = \alpha  L^{\dag} + \alpha  L^{\dag} - 2 \alpha  L^{\dag}\\
				& = \alpha (L^{\dag}_{i,i}+L^{\dag}_{j,j}-2L^{\dag}_{i,j})\\
				& = \alpha R_{i,j}
			\end{align*}
		\end{proof}
		
		\begin{proposition}
			ERF is an RSMFSN.
		\end{proposition}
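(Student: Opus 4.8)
The plan is to check the RSM axioms for the effective resistance function $R$ one property at a time, relying on the fact that almost every analytic ingredient has already been isolated in Lemmas~\ref{resistance is distance}, \ref{RDFCutNode} and~\ref{RDF Property5}. Read literally, an RSMFSN is just a metric satisfying the coincidence axiom, so only the four ``distance'' conditions are strictly needed; but the stronger RSMFTN list --- which is what makes $R$ usable as an RSM the way SDF was in Example~\ref{SDFISRSM} --- adds the scaling law and the behaviour at infinity, and I would verify that stronger list so that nothing is left hanging.

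First I would dispose of the metric-type axioms. Non-negativity, the coincidence axiom, symmetry, and the triangle inequality are exactly items~1--4 of Lemma~\ref{resistance is distance}; here the coincidence axiom ``$R(u,v)=0$ iff $u$ and $v$ agree on every property'' collapses to ``$R(u,v)=0$ iff $u=v$'', because in the electrical network built from the graph each vertex is its own node with no further attributes. The scaling axiom (property~5 of an RSMFTN), that multiplying every edge weight by $\alpha$ multiplies $R$ by $\alpha$, is precisely Lemma~\ref{RDF Property5}, and for undirected graphs the symmetry axiom is again Lemma~\ref{resistance is distance}, item~3.

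It then remains to treat the two points not already stated as lemmas. For the refinement of the triangle inequality --- that $R(u,v)=R(u,w)+R(w,v)$ whenever $w$ is a cutting node separating $u$ and $v$ --- I would cite Lemma~\ref{RDFCutNode}, whose hypothesis (deleting the cut-point $w$ places $u$ and $v$ in different components) is exactly the hypothesis of that clause; when $w$ does not separate $u$ and $v$, the clause demands nothing. For the condition ``$R(u,v)=\infty$ iff there is no path between $u$ and $v$'', I would first extend the Klein--Randic definition, which applies only to connected graphs, by setting $R(u,v):=\infty$ when $u$ and $v$ lie in different connected components; then on each component the resistance formula returns a finite real number, so $R(u,v)<\infty$ precisely when a path exists. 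The main point needing care is checking that this $\infty$-convention stays consistent with the other axioms --- the triangle inequality when one term is $\infty$, and the observation that the cut-node equality can never be forced across two distinct components --- but these are routine once written out. Apart from this bookkeeping there is no genuinely hard step: the proposition follows by assembling the three lemmas together with the extension to disconnected graphs.
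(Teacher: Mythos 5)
Your proposal is correct and follows essentially the same route as the paper: the paper's own proof simply extends $R$ by setting it to $\infty$ for vertices in different components and then declares the result immediate from Lemmas~\ref{resistance is distance}--\ref{RDF Property5}. You are merely more explicit about which lemma discharges which axiom (and you rightly flag that the properties being verified are really the RSMFTN list, despite the proposition's ``RSMFSN'' wording), which is a tidier write-up of the same argument.
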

		\begin{proof}
			We can define that the resistance distance of a pair of vertices is infinite if there is no path between them. Then the proposition is immediate from Lemmas~\ref{resistance is distance}-\ref{RDF Property5}.
		\end{proof}
		
		Therefore, ERF is RSM.
	
		\subsection{Community detection in Zachary's karate club}
		\begin{figure}[h]
			\centering
			\includegraphics{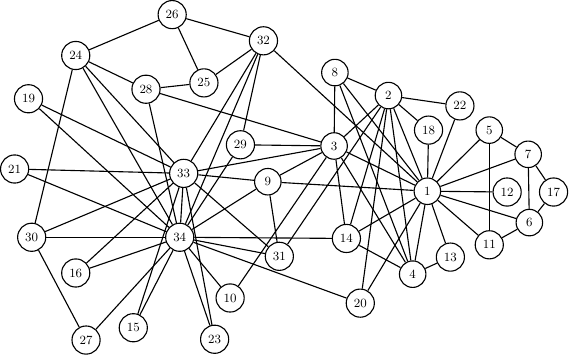}
			\caption{Zachary's karate club}\label{ZKC}
		\end{figure}
		The graph we use for demonstration is Zachary's karate club (Figure~\ref{ZKC}) \citep{Zac77}, which is a popular test case in community detection research.

		At first, we need to choose some proper CP, which is the lower bound of the relation strength within the communities. Here, we let $CP = 1.5$.
	
		Then, we use Klein-Randic method to compute the resistance distance for each pair of nodes in the network and get the corresponding resistance matrix $R$.
		
		After that, we get the corresponding adjoint graph (adj) from $R$ and remove all the edges whose weights are greater than CP. So we get the efficient edges graph (eeg). 
		
		Then we apply Bron-Kerbosch algorithm on eeg and get a list of maximal communities.
		
		In Figure~\ref{Maximal Communities}, we plot those maximal communities in the original graph. Here, we have three maximal communities represented by red, blue and yellow respectively. Some nodes are multi-colour, which means they belong to various maximal communities simultaneously.
	
	\begin{figure*}[h]
		\centering
		\includegraphics[width=0.7\textwidth]{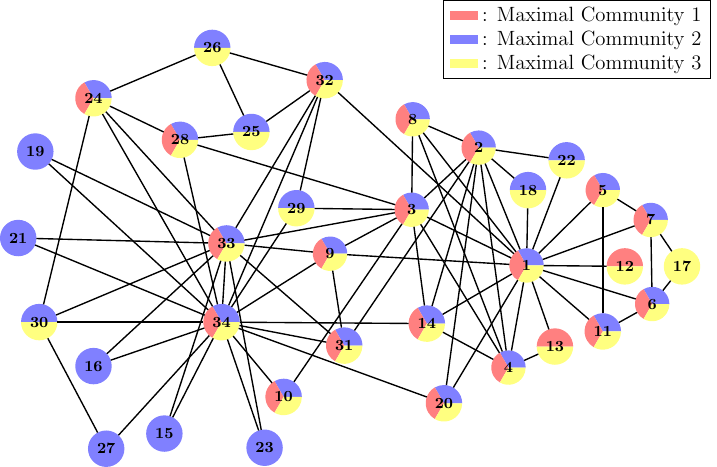}
		\caption{Maximal Communities ($CP = 1.5$)}\label{Maximal Communities}
	\end{figure*}
			
	\section{Conclusion}\label{secConclusion}
	In this paper, we discussed two most common types of networks: transmission networks and similarity networks. Two corresponding relation strength measurements (RSMFTN and RSMFSN) are defined. And we reduce them into a unified graph model. Based on this, we provide a general definition of communities and derive a corresponding detection algorithm. At last, we give a demonstration to show how the algorithm works.
	
	Our paper gives a general procedure to detect community structures in practical networks. Readers can specialize our algorithm to derive theirs according to the problems confronting them. 
	
	\section{Limitations and Future Work}\label{secLimit}
	Generally, RSMs consider the whole network topology. So does the algorithm to find the maximal community structures. While the algorithm gives the accurate results, it is NP hard. So our algorithm may not suit the community detections in dynamic networks. Besides, the definition we give in this paper is based on the absolute strengths among the nodes. So the users should always give a proper community parameter $\epsilon$, which is hard to find sometimes.
	
	Although we proved that SDF and ERF are RSMFTN, many other RSMFTNs still wait to be discovered. Besides, the definition based on the absolute relation strength should derive a corresponding definition based on the relative relation strength. The key point is how to give a general definition of the neighbour nodes when applying different RSMs. 
	
%%%%%%%%%%%%%%%%%%%%%%%%%%%%%%%%%%%%%%%%%%%%%%%%%%%%%%%%%%%%%%%%%%%%%%%%%%%%%%%%%%%%%%%%%%%%%%%%%%%%%%%%

% if have a single appendix:
%\appendix[Proof of the Zonklar Equations]
% or
%\appendix  % for no appendix heading
% do not use \section anymore after \appendix, only \section*
% is possibly needed

% use appendices with more than one appendix
% then use \section to start each appendix
% you must declare a \section before using any
% \subsection or using \label (\appendices by itself
% starts a section numbered zero.)fr
%

\bibliographystyle{IEEEtran}
\bibliography{bibliography}

\begin{IEEEbiography}
[{\includegraphics[width=1in,height=1.25in,clip,keepaspectratio]{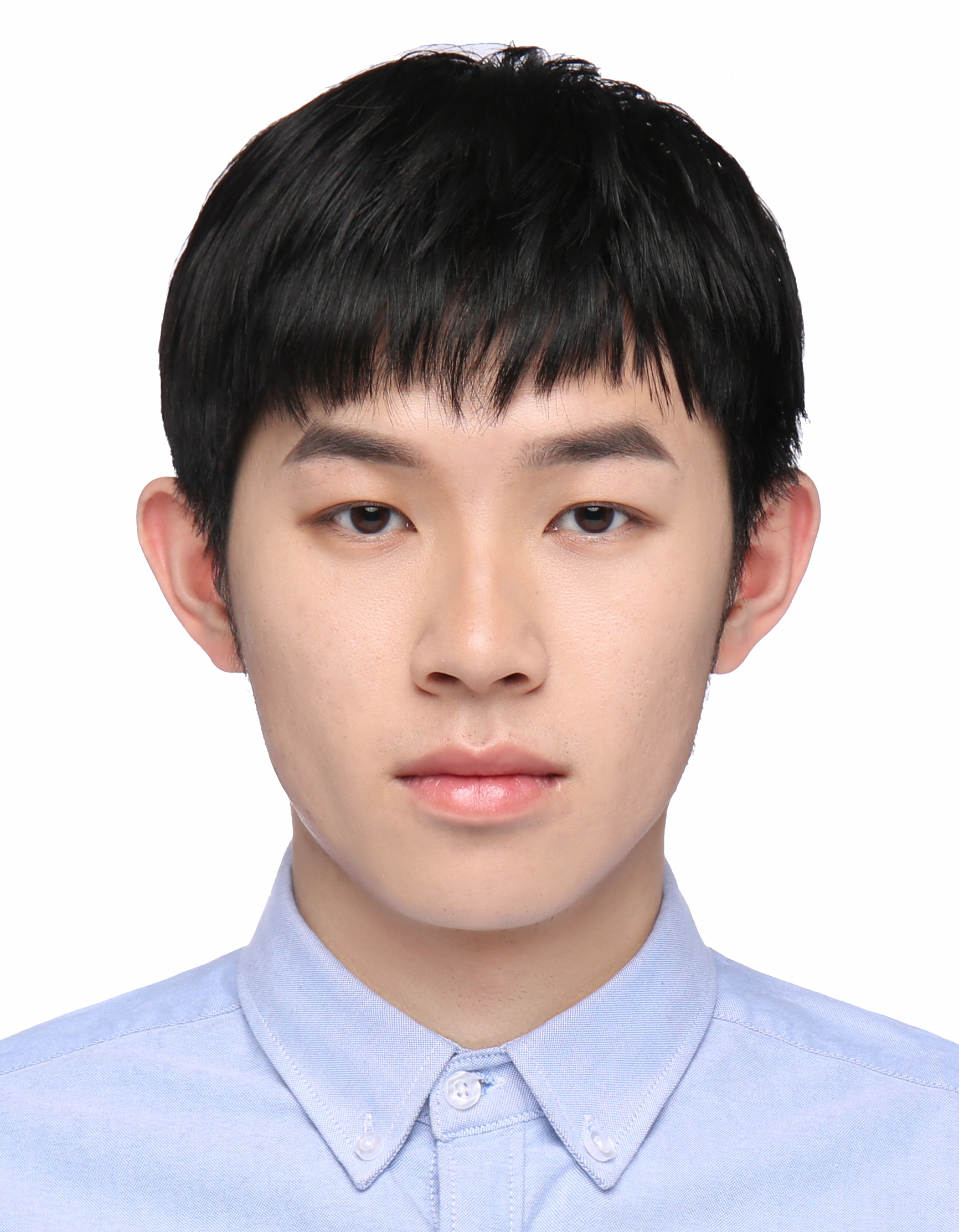}}]{Haoye Lu} received his Bachelor of Science Joint degree Honours in Computer Science and Mathematics from University of Ottawa, Canada, in 2017. He was the recipient of the University Silver Medal for his second highest academic achievement in Faculty of Science. 

Haoye joined the School of Electrical Engineering and Computer Science (EECS), University of Ottawa in 2017 and is currently a master student pursuing Master of Science degree in Computer Science. He was also the recipient of the Full International Scholarship. He is a reviewer of 2017 IEEE Global Communications Conference and has published papers in quantum communication and artificial intelligence fields. His research interests include quantum communication, artificial intelligence and networks structures. 
\end{IEEEbiography}
\begin{IEEEbiography}
[{\includegraphics[width=1in,height=1.25in,clip,keepaspectratio]{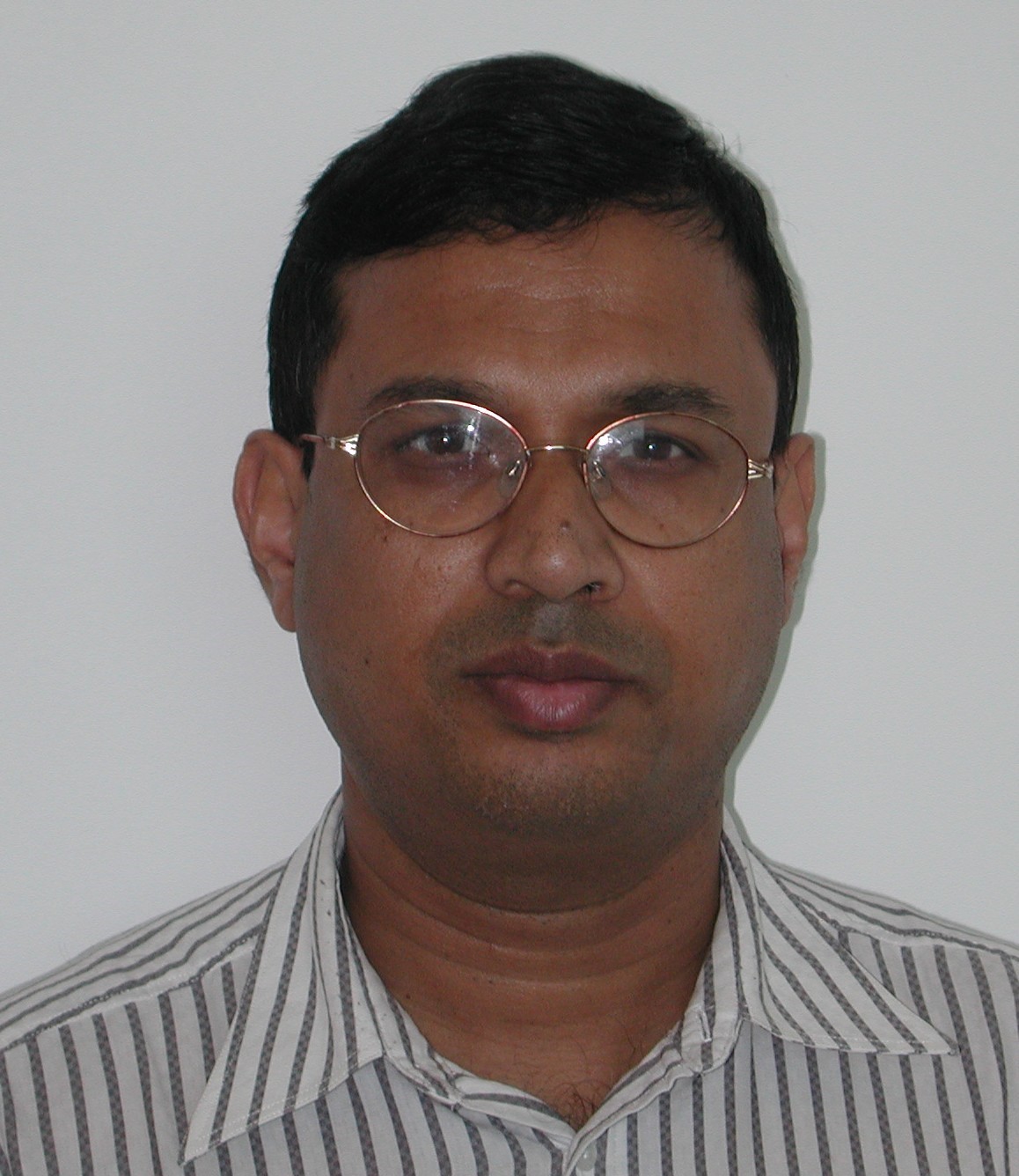}}]{Amiya Nayak} received his B.Math. degree in Computer Science and Combinatorics and Optimization from University of Waterloo, Canada, in 1981, and Ph.D. in Systems and Computer Engineering from Carleton University, Canada, in 1991.
He has over 17 years of industrial experience in software engineering, avionics and navigation systems, simulation and system level performance analysis.
He has served in the Editorial Board of several journals, including IEEE Transactions on Parallel \& Distributed Systems, International Journal of Parallel, Emergent and Distributed Systems, Journal of Sensor and Actuator Networks, and EURASIP Journal of Wireless Communications and Networking. Currently, he is a Full Professor at the School of Electrical Engineering and Computer Science at the University of Ottawa. His research interests include software-defined networking, mobile computing, wireless sensor networks, and vehicular ad hoc networks.
\end{IEEEbiography}

\end{document}